\def\norm #1{\left\|#1\right\|}
\def\twon #1{\left\|#1\right\|_2}
\def\onen #1{\left\|#1\right\|_1}
\def\frobn #1{\left\|#1\right\|_{\text{F}}}
\def\atomn #1{\left\|#1\right\|_{\cA}}
\def\abs #1{\left|#1\right|}
\def\st{\text{subject to }}
\def\bC{\mathbb{C}}
\def\bR{\mathbb{R}}
\def\bT{\mathbb{T}}
\def\m #1{\boldsymbol{#1}}
\def\cA{\mathcal{A}}
\def\cK{\mathcal{K}}
\def\cL{\mathcal{L}}
\def\cM{\mathcal{M}}
\def\cQ{\mathcal{Q}}
\def\cS{\mathcal{S}}
\def\bee{\begin{equation}}
\def\ene{\end{equation}}
\def\beq{\begin{eqnarray}}
\def\enq{\end{eqnarray}}
\newtheorem{rem}{Remark}
\newtheorem{thm}{Theorem}
\newtheorem{prop}{Proposition}
\def\equ #1{\begin{equation}#1\end{equation}}
\def\sbra #1{\left(#1\right)}
\def\mbra #1{\left[#1\right]}
\def\lbra #1{\left\{#1\right\}}
\def\diag #1{\text{diag}#1}
\def\tr #1{\text{tr}#1}
\def\rank #1{\text{rank}#1}
\def\st {\text{ subject to }}
\title{Enhancing Sparsity and Resolution via Reweighted Atomic Norm Minimization}
\author{Zai Yang, {\em Member, IEEE}, and Lihua Xie, {\em Fellow, IEEE}
\thanks{Manuscript August 2014; revised March and August 2015; accepted October 2015. Parts of this paper were presented at the 2015 IEEE International Conference on Acoustics, Speech and Signal Processing (ICASSP), Brisbane, Australia, April 2015 \cite{yang2015achieving}.

The authors are with the School of Electrical and Electronic Engineering, Nanyang Technological University, 639798,
Singapore (e-mail: \{yangzai, elhxie\}@ntu.edu.sg).}}
\begin{document}
\maketitle

\begin{abstract}
The mathematical theory of super-resolution developed recently by Cand\`{e}s and Fernandes-Granda states that a continuous, sparse frequency spectrum can be recovered with infinite precision via a (convex) atomic norm technique given a set of uniform time-space samples. This theory was then extended to the cases of partial/compressive samples and/or multiple measurement vectors via atomic norm minimization (ANM), known as off-grid/continuous compressed sensing (CCS). However, a major problem of existing atomic norm methods is that the frequencies can be recovered only if they are sufficiently separated, prohibiting commonly known high resolution. In this paper, a novel (nonconvex) sparse metric is proposed that promotes sparsity to a greater extent than the atomic norm. Using this metric an optimization problem is formulated and a locally convergent iterative algorithm is implemented. The algorithm iteratively carries out ANM with a sound reweighting strategy which enhances sparsity and resolution, and is termed as reweighted atomic-norm minimization (RAM). Extensive numerical simulations are carried out to demonstrate the advantageous performance of RAM with application to direction of arrival (DOA) estimation.
\end{abstract}

\begin{IEEEkeywords}
Continuous compressed sensing (CCS), DOA estimation, frequency estimation, gridless sparse method, high resolution, reweighted atomic norm minimization (RAM).
\end{IEEEkeywords}

\section{Introduction}
Compressed sensing (CS) \cite{candes2006compressive,donoho2006compressed} refers to a technique of reconstructing a high dimensional signal from far fewer samples and has brought significant impact on signal processing and information theory in the past decade. In conventional wisdom, the signal of interest needs to be sparse under a finite discrete dictionary for successful reconstruction, which limits its applications, for example, to array processing, radar and sonar, where the dictionary is typically specified by one or more continuous parameters. In this paper, we are concerned about a compressed sensing problem with a continuous dictionary which arises in line spectral estimation and array processing \cite{krim1996two,stoica2005spectral}. In particular, we are interested in recovering $L$ discrete sinusoidal signals which compose the data matrix $\m{Y}^o\in\bC^{N\times L}$ with its $(j,t)$th element (corrupted by noise in practice)
\equ{y_{jt}^o=\sum_{k=1}^K s_{kt}e^{i2\pi (j-1) f_k}, \quad \sbra{j,t}\in \mbra{N}\times\mbra{L}, \label{formu:observmodel2}}
where $i=\sqrt{-1}$, $f_k\in\bT\triangleq\left[0,1\right]$, $s_{kt}\in\bC$ and $\mbra{N}=\lbra{1,2,\dots,N}$.
This means that each column of $\m{Y}^o$ is superimposed by $K$ discrete sinusoids with frequencies $\lbra{f_k}$ and amplitudes $\lbra{s_{kt}}$. To recover $\m{Y}^o$ (and the frequencies in many applications), however, we are only given partial/compressive samples on its rows indexed by $\m{\Omega} \subset \mbra{N}$ (of size $M < N$), denoted by $\m{Y}_{\m{\Omega}}^o$. This problem is referred to as off-grid or continuous compressed sensing (CCS) according to \cite{tang2012compressed,yang2014continuous} differing from the existing CS framework in the sense that every frequency $f_k$ can take any continuous value in $\bT$ rather than constrained on a finite discrete grid.

The CCS problem in the case of $L=1$ (a.k.a. the single-measurement-vector (SMV) case) is usually known as line spectral estimation in which frequency recovery though is of main interest. The use of compressive data can lead to efficient sampling and/or energy saving. It also can be caused by data missing due to adversary environmental effects. The multiple-measurement-vector (MMV) case with $L>1$ is common in array processing where one estimates directions of a few narrowband sources using outputs of an antenna array. Readers are referred to \cite{krim1996two,stoica2005spectral} for derivation of the model in \eqref{formu:observmodel2}. Therein $\m{Y}^o$ consists of outputs of a virtual $N$-element uniform linear array (ULA), in which adjacent antennas are spaced by half a wavelength, over $L$ time snapshots. In particular, each column of $\m{Y}^o$ corresponds to one snapshot of the ULA and each row consists of outputs of a single antenna. The fact that we have only access to $\m{Y}_{\m{\Omega}}^o$ means that we actually use a sparse linear array (SLA) that is obtained by retaining the antennas of the ULA indexed by $\m{\Omega}$. Therefore, the index set $\m{\Omega}$ refers to geometry of the SLA and a smaller $M$ means use of fewer antennas (note that SLAs are common in practice for obtaining a large aperture from a limited number of antennas, see, e.g., \cite{linebarger1993difference} and the references therein). Each frequency component corresponds to one source. The value of $f_k$ uniquely determines the direction of source $k$, and vice versa. Consequently, the problem of direction of arrival (DOA) estimation using a SLA $\m{\Omega}$ is exactly the frequency estimation problem in CCS given the measurement matrix $\m{Y}_{\m{\Omega}}^o$.

Due to its connections to line spectral estimation and DOA estimation, studies of the CCS problem have a long history while frequency estimation has been mainly focused on. Well known conventional methods include periodogram (or beamforming), Capon's beamforming and subspace methods like MUSIC (see the review in \cite{stoica2005spectral}). Periodogram suffers from the so-called leakage problem and the Fourier resolution limit of $\frac{1}{N}$ even in the full data case when $M=N$ \cite{stoica2005spectral}. It therefore has difficulties in resolving two closely spaced frequencies. The situation becomes even worse in the compressive data case. Capon's beamforming and MUSIC are \emph{high resolution} methods in the sense that they can break the aforementioned resolution limit. Since they are covariance-based methods sufficient snapshots are required to estimate the data covariance. Moreover, they are sensitive to source correlations. With the development of sparse signal representation and later the CS concept, sparse methods have been popular in the last decade which exploit the prior knowledge that the number of frequency components $K$ is small \cite{malioutov2005sparse,hyder2010direction,stoica2011new,stoica2011spice, wei2012doa,fannjiang2012coherence,hu2013doa, duarte2013spectral,liu2013sparsity,hu2012compressed,yang2012robustly,yang2013off,hu2013fast,austin2013dynamic,tan2014joint}. In these methods, however, the frequency domain $\bT$ has to be gridded/discretized into a finite set, resulting in the grid mismatch problem that limits the estimation accuracy as well as brings challenges to the theoretical performance analysis \cite{chi2011sensitivity,yang2013off}. Though modified, off-grid estimation methods \cite{hu2012compressed,yang2012robustly,yang2013off,hu2013fast,austin2013dynamic,tan2014joint} have been implemented to alleviate these drawbacks, overall they are still based on gridding of the frequency domain.

A mathematical theory of super-resolution was recently introduced by Cand\`{e}s and Fernandes-Granda \cite{candes2013towards}. They studied frequency estimation in the SMV, full data case and proposed a \emph{gridless} convex optimization method based on the atomic norm (or the total variation norm) \cite{chandrasekaran2012convex}. In addition, they proved that the frequencies can be recovered with infinite precision in the absence of noise once they are mutually separated by at least $\frac{4}{N}$. This theory was then extended to the cases of compressive data and MMVs by Tang \emph{et al.} \cite{tang2012compressed} and the authors \cite{yang2014continuous,yang2014exact}, showing that the signal and the frequencies can be exactly recovered with high probability via atomic norm minimization (ANM) provided $M\geq O\sbra{K\ln K\ln N}$ and the same frequency separation condition holds. Other related papers include \cite{bhaskar2013atomic,candes2013super,tang2015near,yang2014discretization,yang2015gridless, azais2014spike, condat2015cadzow, chen2014robust,tan2014direction,mishra2014off,lu2015distributed}. While the atomic norm techniques completely eliminate grid mismatches of earlier grid-based sparse methods, a major problem is that the frequencies have to be sufficiently separated for successful recovery, prohibiting \emph{high resolution}.\footnote{The frequency separation $\frac{4}{N}$ is sufficient but not necessary. Empirical studies in \cite{tang2012compressed} suggest that this value is about $\frac{1}{N}$ in the SMV case, while \cite{yang2014exact} shows that it also depends on other factors like $K$, $M$ and $L$.}

In this paper, we propose a high resolution gridless sparse method for signal and frequency recovery in CCS. Our method is motivated by the formulations and properties of the atomic $\ell_0$ norm and the atomic norm in \cite{yang2014continuous,yang2014exact}. In particular, the atomic $\ell_0$ norm directly exploits sparsity and has no resolution limit but is NP hard to compute. To the contrary, as a convex relaxation the atomic norm can be efficiently computed but suffers from a resolution limit as mentioned above. We propose a novel sparse metric and theoretically show that the new metric fills the gap between the atomic $\ell_0$ norm and the atomic norm. It approaches the former under appropriate parameter setting and breaks the resolution limit. Using this sparse metric we formulate a nonconvex optimization problem for signal and frequency recovery. A locally convergent iterative algorithm is presented to solve the problem. Some further analysis shows that the algorithm iteratively carries out ANM with a sound reweighting strategy that determines preference of frequency selection based on the latest estimate and enhances sparsity and resolution. The resulting algorithm is termed as reweighted atomic-norm minimization (RAM). Extensive numerical simulations are carried out to demonstrate the performance of RAM with application to DOA estimation compared to existing art.

We note that the idea of reweighted optimization for enhancing sparsity is not new. For example, reweighted $\ell_1$ algorithms  have been introduced for discrete CS \cite{lobo2007portfolio,candes2008enhancing,wipf2010iterative,stoica2014weighted}, and reweighted trace minimization for low rank matrix recovery (LRMR) \cite{fazel2003log,mohan2012iterative}. However, it is unclear how to implement a reweighting strategy in the continuous dictionary setting until this paper. Furthermore, besides sparsity we show that the proposed reweighted algorithm enhances resolution that is of great importance in CCS.

Notations used in this paper are as follows. $\bR$ and $\bC$ denote the sets of real and complex numbers respectively. $\bT$ denotes the unit circle $\left[0,1\right]$ by identifying the beginning and the ending points. Boldface letters are reserved for vectors and matrices. For an integer $N$, $[N]\triangleq\lbra{1,\cdots,N}$. $\abs{\cdot}$ denotes cardinality of a set, amplitude of a scalar, or determinant of a squared matrix. $\onen{\cdot}$, $\twon{\cdot}$ and $\frobn{\cdot}$ denote the $\ell_1$, $\ell_2$ and Frobenius norms respectively. $\m{A}^T$ and $\m{A}^H$ are the matrix transpose and conjugate transpose of $\m{A}$ respectively. $x_j$ is the $j$th entry of a vector $\m{x}$. Unless otherwise stated, $\m{x}_{\m{\Omega}}$ and $\m{A}_{\m{\Omega}}$ respectively reserve the entries of $\m{x}$ and the rows of $\m{A}$ indexed by a set $\m{\Omega}$. For a vector $\m{x}$, $\diag\sbra{\m{x}}$ is a diagonal matrix with $\m{x}$ being its diagonal. $\rank\sbra{\cdot}$ denotes the rank and $\tr\sbra{\cdot}$ the trace. $\m{A}\geq\m{0}$ means that $\m{A}$ is positive semidefinite (PSD).

The rest of the paper is organized as follows. Section \ref{sec:preliminary} revisits preliminary gridless sparse methods that motivate this paper. Section \ref{sec:novelmetric} presents a novel sparse metric for signal and frequency recovery. Section \ref{sec:RAM} introduces the RAM algorithm. Section \ref{sec:implementation} presents some algorithm implementation strategies for accuracy and speed considerations. Section \ref{sec:simulation} provides extensive numerical simulations to demonstrate the performance of RAM. Section \ref{sec:conclusion} concludes this paper.

\section{Preliminary Gridless Sparse Methods by Exploiting Sparsity} \label{sec:preliminary}

Unless otherwise stated, we assume in this paper that the observed data $\m{Y}_{\m{\Omega}}^o$ is contaminated by noise whose Frobenius norm is bounded by $\eta\geq0$. It is clear that $\eta=0$ refers to the noiseless case. The CCS problem is solved by exploiting sparsity in the sense that the number of frequency components $K$ is small. In particular, we seek a sparse candidate $\m{Y}$ that is composed of a few frequency components and is meanwhile consistent with the observed data by imposing that $\m{Y}\in\cS$, where
\equ{\cS\triangleq\lbra{\m{Y}\in\bC^{N\times L}:\; \frobn{\m{Y}_{\m{\Omega}} - \m{Y}_{\m{\Omega}}^o}\leq\eta }. \notag}
Therefore, we first define a sparse metric of $\m{Y}$ and then optimize the metric over $\cS$ for its solution. The frequencies $\lbra{f_k}$ are estimated using the frequency components composing $\m{Y}$.

A direct sparse metric is the smallest number of frequency components composing $\m{Y}$, known as the atomic $\ell_0$ norm and denoted by $\norm{\m{Y}}_{\cA,0}$ \cite{tang2012compressed,yang2014continuous,yang2014exact}:
\equ{\norm{\m{Y}}_{\cA,0}
=\inf_{f_k,\m{s}_k}\lbra{\cK: \m{Y}=\sum_{k=1}^{\cK} \m{a}\sbra{f_k}\m{s}_k}, \label{formu:AL0}}
where $\m{a}\sbra{f}=\mbra{1,e^{i2\pi f},\dots,e^{i2\pi\sbra{N-1}f}}^T\in\bC^N$ denotes a discrete sinusoid with frequency $f\in\bT$ and $\m{s}_k\in\bC^{L\times 1}$ is the coefficient vector of the $k$th sinusoid. Following from \cite{tang2012compressed,yang2014continuous,yang2014exact}, $\norm{\m{Y}}_{\cA,0}$ can be characterized as the following rank minimization problem:
\equ{\begin{split}\norm{\m{Y}}_{\cA,0}
=&\min_{\m{u}} \rank\sbra{T\sbra{\m{u}}},\\
&\st \tr\sbra{\m{Y}^HT\sbra{\m{u}}^{-1}\m{Y}}<+\infty,\\
&\phantom{\st } T\sbra{\m{u}}\geq\m{0}.\end{split} \label{formu:atom0norm}}
Throughout this paper we use the following identity whenever $\m{R}\in \bC^{N\times N}$ is positive semidefinite:
\equ{\begin{split}
&\tr\sbra{\m{Y}^H\m{R}^{-1}\m{Y}} \\
&= \min_{\m{X}} \tr\sbra{\m{X}}, \st \begin{bmatrix}\m{X} & \m{Y}^H \\ \m{Y} & \m{R} \end{bmatrix} \geq \m{0}. \end{split} }
The first constraint in (\ref{formu:atom0norm}) imposes that $\m{Y}$ lies in the range space of a (Hermitian) Toeplitz matrix
\equ{T\sbra{\m{u}}=\begin{bmatrix}u_1 & u_2 & \cdots & u_N\\ {u}_2^H & u_1 & \cdots & u_{N-1}\\ \vdots & \vdots & \ddots & \vdots \\ {u}_N^H & {u}_{N-1}^H & \cdots & u_1\end{bmatrix}\in\bC^{N\times N}, }
where $u_j$ is the $j$th entry of $\m{u}\in\bC^N$. The frequencies composing $\m{Y}$ are encoded in $T\sbra{\m{u}}$. Once an optimizer of $\m{u}$, say $\m{u}^*$, is obtained the frequencies can be retrieved from $T\sbra{\m{u}^*}$ using the Vandermonde decomposition lemma (see, e.g., \cite{stoica2005spectral}), which states that any PSD Toeplitz matrix $T\sbra{\m{u}^*}$ can be decomposed as
\equ{T\sbra{\m{u}^*}=\sum_{k=1}^{K^*} p_k^*\m{a}\sbra{f_k^*}\m{a}\sbra{f_k^*}^H,}
where the order $K^*=\rank\sbra{T\sbra{\m{u}^*}}$ and $p_k^*>0$ (note that this decomposition is unique if $K^*<N$ and a computational method can be found in \cite[Appendix A]{yang2015gridless}). Therefore, by (\ref{formu:atom0norm}) the CCS problem is reformulated as a LRMR problem in which the matrix $T\sbra{\m{u}}$ is Toeplitz and PSD and its range space contains $\m{Y}$.

The atomic $\ell_0$ norm exploits sparsity to the greatest extent possible; however, it is nonconvex and NP-hard to compute according to the rank minimization formulation and it thus encourages computationally feasible alternatives. In this spirit, the atomic ($\ell_1$) norm, denoted by $\atomn{\m{Y}}$, is introduced as a convex relaxation of $\norm{\m{Y}}_{\cA,0}$ \cite{tang2012compressed,yang2014continuous,yang2014exact}:
\equ{\begin{split}\atomn{\m{Y}}= \inf_{f_k,\m{s}_k}\lbra{\sum_k \twon{\m{s}_k}: \m{Y}=\sum_k \m{a}\sbra{f_k}\m{s}_k}\end{split} \label{formu:atomicnorm}}
which is a continuous counterpart of the $\ell_{2,1}$ norm utilized for joint sparse recovery in discrete CS (see, e.g., \cite{malioutov2005sparse,van2010theoretical}). $\atomn{\m{Y}}$ is a norm and has the following semidefinite formulation \cite{tang2012compressed,yang2014continuous,yang2014exact}:
\equ{\begin{split}\norm{\m{Y}}_{\cA}=
&\min_{\m{u}} \frac{1}{2\sqrt{N}} \mbra{\tr\sbra{T\sbra{\m{u}}} + \tr\sbra{\m{Y}^HT\sbra{\m{u}}^{-1} \m{Y}}},\\
&\st T\sbra{\m{u}}\geq\m{0}. \end{split} \label{formu:AN_SDP}}
From the perspective of LRMR, (\ref{formu:AN_SDP}) attempts to recover the low rank matrix $T\sbra{\m{u}}$ by relaxing the pseudo rank norm in (\ref{formu:atom0norm}) to the nuclear norm (or the trace norm for a PSD matrix). Again, the frequencies are encoded in $T\sbra{\m{u}}$ and can be obtained using the Vandermonde decomposition once the optimization problem is solved within a polynomial time. The atomic norm is computationally advantageous compared to the atomic $\ell_0$ norm while it suffers from a resolution limit due to the relaxation which is not shared by the latter \cite{candes2013towards,tang2012compressed,yang2014exact}.

\section{Enhancing Sparsity and Resolution via A Novel Sparse Metric} \label{sec:novelmetric}


Inspired by the link between CCS and LRMR demonstrated above, we propose the following sparse metric of $\m{Y}$:
\equ{\begin{split}\cM^{\epsilon}\sbra{\m{Y}}=
&\min_{\m{u}} \ln\abs{T\sbra{\m{u}}+\epsilon\m{I}} + \tr\sbra{\m{Y}^HT\sbra{\m{u}}^{-1}\m{Y}},\\
&\st T\sbra{\m{u}}\geq\m{0},\end{split} \label{formu:nonconvexrelax}}
where $\epsilon>0$ is a regularization parameter that avoids the first term being $-\infty$ when $T\sbra{\m{u}}$ is rank deficient.
Note that the log-det heuristic $\log\abs{\cdot}$ has been widely used as a smooth surrogate for the rank of a PSD matrix (see, e.g., \cite{david1994algorithms,fazel2003log,mohan2012iterative}). Also, a similar logarithmic penalty has been adopted to approximate the pseudo $\ell_0$ norm for discrete sparse recovery \cite{rao1999affine,chartrand2008iteratively,candes2008enhancing}. From the perspective of LRMR, the atomic $\ell_0$ norm minimizes the number of nonzero eigenvalues of $T\sbra{\m{u}}$ while the atomic norm minimizes the sum of the eigenvalues. In contrast, the new metric $\cM^{\epsilon}\sbra{\m{Y}}$ puts penalty on $\sum_{k=1}^N\ln\abs{\lambda_k+\epsilon}$, where $\lbra{\lambda_k}_{k=1}^N$ denotes the eigenvalues. We plot the function $h(\lambda)=\ln\abs{\lambda+\epsilon}$ with different $\epsilon$'s in Fig. \ref{Fig:sparsity} together with the constant function (except at $\lambda=0$) and the identity function corresponding to the $\ell_0$ and $\ell_1$ norms respectively, where $h(\lambda)$ is translated and scaled for better illustration without altering its sparsity-enhancing property. Intuitively, $h(\lambda)$ gets close to the $\ell_1$ norm for large $\epsilon$ while it approaches the $\ell_0$ norm as $\epsilon\rightarrow0$. Therefore, we expect that the new metric $\cM^{\epsilon}\sbra{\m{Y}}$ bridges $\atomn{\m{Y}}$ and $\norm{\m{Y}}_{\cA,0}$ when $\epsilon$ varies from $+\infty$ to $0$. Formally, we have the following results.

\begin{figure}
\centering
  \includegraphics[width=3in]{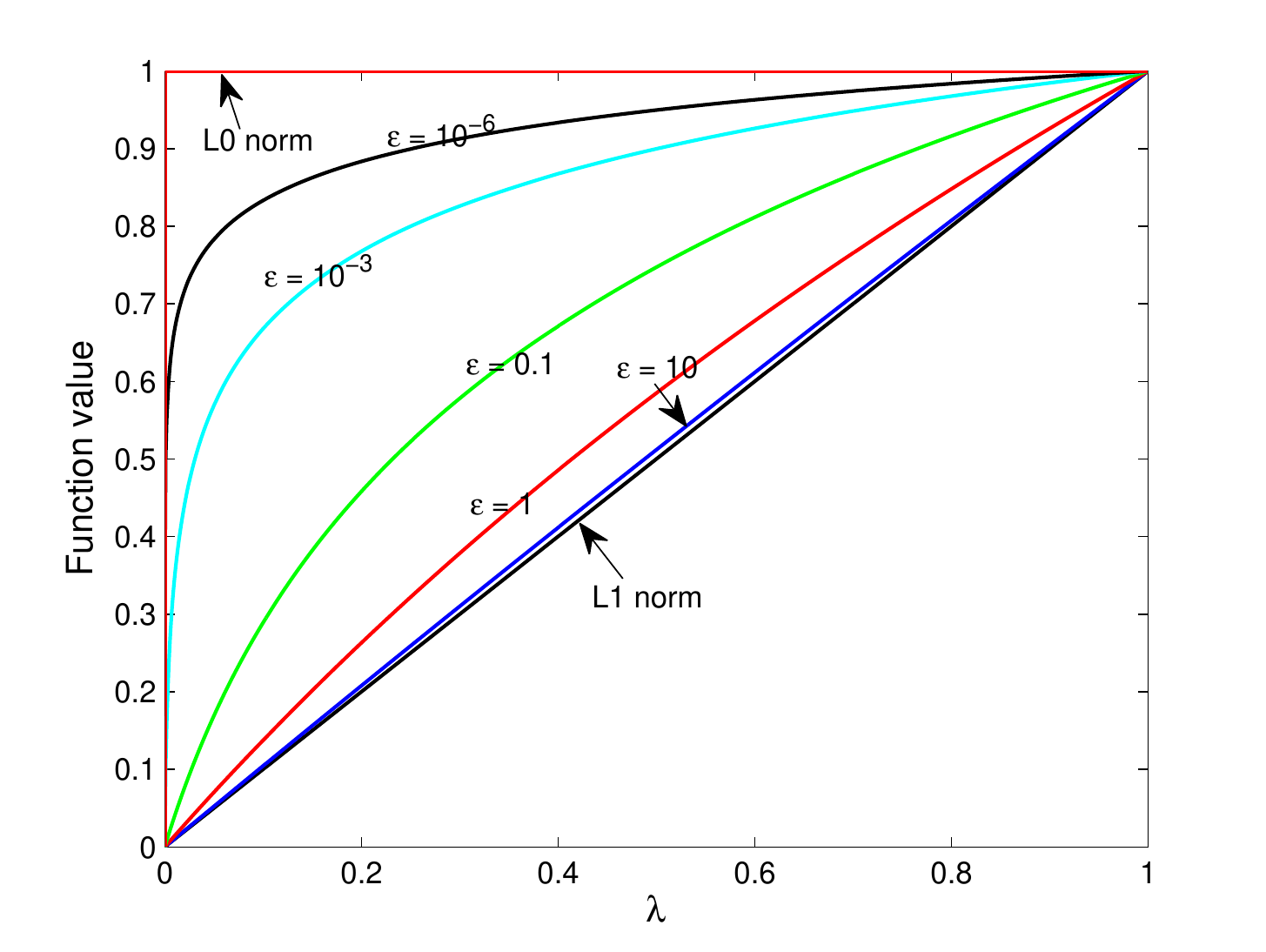}
\centering
\caption{Illustration of the sparsity-promoting property of $\cM^{\epsilon}\sbra{\cdot}$ with respect to $\epsilon$. The plotted curves include the $\ell_0$ and $\ell_1$ norms corresponding to $\norm{\cdot}_{\cA,0}$ and $\norm{\cdot}_{\cA}$ respectively, and $\ln\abs{\lambda+\epsilon}$ corresponding to $\cM^{\epsilon}\sbra{\cdot}$ with $\epsilon=10, 1, 0.1, 10^{-3}$ and $10^{-6}$. $\ln\abs{\lambda+\epsilon}$ is translated and scaled such that it equals 0 and 1 at $\lambda=0$ and 1 respectively for better illustration.} \label{Fig:sparsity}
\end{figure}

\begin{thm} Let $\epsilon\rightarrow+\infty$. Then,
\equ{\cM^{\epsilon}\sbra{\m{Y}}-N\ln\epsilon \sim 2\sqrt{N}\atomn{\m{Y}}\epsilon^{-\frac{1}{2}}, \label{formu:epsilontoinf}}
i.e., they are equivalent infinitesimals. \label{thm:epsilontoinf}
\end{thm}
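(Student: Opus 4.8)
The plan is to peel off the dominant $N\ln\epsilon$ term and reduce the remaining minimization to the atomic-norm SDP \eqref{formu:AN_SDP} via a homogeneity argument. Writing $\lbra{\lambda_k}_{k=1}^N$ for the eigenvalues of $T\sbra{\m{u}}$, I first record the decomposition
\equ{\ln\abs{T\sbra{\m{u}}+\epsilon\m{I}} = N\ln\epsilon + \sum_{k=1}^N \ln\sbra{1+\lambda_k/\epsilon}, \notag}
so that $\cM^{\epsilon}\sbra{\m{Y}}-N\ln\epsilon = \min_{\m{u}}\left[\sum_k\ln\sbra{1+\lambda_k/\epsilon}+\tr\sbra{\m{Y}^HT\sbra{\m{u}}^{-1}\m{Y}}\right]$ subject to $T\sbra{\m{u}}\geq\m{0}$. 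The crucial auxiliary fact is that the linearized surrogate $g_\epsilon\sbra{\m{u}}\triangleq\epsilon^{-1}\tr\sbra{T\sbra{\m{u}}}+\tr\sbra{\m{Y}^HT\sbra{\m{u}}^{-1}\m{Y}}$ has minimum value exactly $2\sqrt{N}\atomn{\m{Y}}\epsilon^{-\frac12}$. This I would obtain from the substitution $\m{u}=\sqrt{\epsilon}\,\m{v}$: since $T\sbra{\cdot}$ is linear and $\tr\sbra{\m{Y}^H\sbra{\sqrt{\epsilon}T\sbra{\m{v}}}^{-1}\m{Y}}=\epsilon^{-\frac12}\tr\sbra{\m{Y}^HT\sbra{\m{v}}^{-1}\m{Y}}$, one finds $g_\epsilon\sbra{\sqrt{\epsilon}\m{v}}=\epsilon^{-\frac12}\mbra{\tr\sbra{T\sbra{\m{v}}}+\tr\sbra{\m{Y}^HT\sbra{\m{v}}^{-1}\m{Y}}}$, whose minimum over $T\sbra{\m{v}}\geq\m{0}$ equals $\epsilon^{-\frac12}\cdot2\sqrt{N}\atomn{\m{Y}}$ by \eqref{formu:AN_SDP}; the substitution being a bijection on the feasible set transfers this to $g_\epsilon$ itself.

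For the upper bound I would invoke $\ln\sbra{1+x}\leq x$ for $x\geq0$, giving $\sum_k\ln\sbra{1+\lambda_k/\epsilon}\leq\epsilon^{-1}\tr\sbra{T\sbra{\m{u}}}$ and hence $\cM^{\epsilon}\sbra{\m{Y}}-N\ln\epsilon\leq\min_{\m{u}}g_\epsilon\sbra{\m{u}}=2\sqrt{N}\atomn{\m{Y}}\epsilon^{-\frac12}$. For the matching lower bound I would use the reverse estimate $\ln\sbra{1+x}\geq x-\tfrac12 x^2$ (valid for $x\geq0$, since its derivative is $x^2/\sbra{1+x}\geq0$), which yields, at a minimizer $\m{u}^*_\epsilon$,
\equ{\cM^{\epsilon}\sbra{\m{Y}}-N\ln\epsilon \geq \min_{\m{u}}g_\epsilon\sbra{\m{u}} - \frac{1}{2\epsilon^2}\frobn{T\sbra{\m{u}^*_\epsilon}}^2. \notag}
It then suffices to show the correction term is $o\sbra{\epsilon^{-\frac12}}$.

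The hard part is controlling $\frobn{T\sbra{\m{u}^*_\epsilon}}$ uniformly as $\epsilon\rainfty$. I would extract this from the upper bound already in hand: because every summand $\ln\sbra{1+\lambda_k/\epsilon}$ and the data-fit term are nonnegative, each eigenvalue of the minimizer obeys $\ln\sbra{1+\lambda_k/\epsilon}\leq\cM^{\epsilon}\sbra{\m{Y}}-N\ln\epsilon\leq 2\sqrt{N}\atomn{\m{Y}}\epsilon^{-\frac12}$, whence $\lambda_k\leq\epsilon\sbra{e^{2\sqrt{N}\atomn{\m{Y}}\epsilon^{-1/2}}-1}=O\sbra{\sqrt{\epsilon}}$ since the exponent vanishes and $e^t-1\sim t$. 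Therefore $\frobn{T\sbra{\m{u}^*_\epsilon}}^2=\sum_k\lambda_k^2=O\sbra{\epsilon}$ and the correction term is $O\sbra{\epsilon^{-1}}=o\sbra{\epsilon^{-\frac12}}$.

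Squeezing the two bounds then gives $\cM^{\epsilon}\sbra{\m{Y}}-N\ln\epsilon=2\sqrt{N}\atomn{\m{Y}}\epsilon^{-\frac12}\sbra{1+o\sbra{1}}$ as $\epsilon\rainfty$ (assuming $\m{Y}\neq\m{0}$, the case $\m{Y}=\m{0}$ being trivial as both sides vanish), which is exactly the asserted asymptotic equivalence.
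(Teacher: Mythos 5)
Your proof is correct, and while it shares the paper's skeleton---peel off $N\ln\epsilon$ and anchor everything to the semidefinite characterization \eqref{formu:AN_SDP} via the scaling $\m{u}\mapsto\epsilon^{\frac{1}{2}}\m{u}$, under which the trace-plus-data-fit objective is homogeneous of degree $-\frac{1}{2}$---the execution of both bounds is genuinely different. The paper obtains the upper bound by evaluating the objective at the scaled ANM minimizer $\epsilon^{\frac{1}{2}}\m{u}^*$ and expanding $\ln\abs{\epsilon^{-\frac{1}{2}}T\sbra{\m{u}^*}+\m{I}}$ to first order, yielding $2\sqrt{N}\atomn{\m{Y}}\epsilon^{-\frac{1}{2}}+o\sbra{\epsilon^{-\frac{1}{2}}}$; your inequality $\ln\sbra{1+x}\leq x$ instead makes this an exact, non-asymptotic bound $\cM^{\epsilon}\sbra{\m{Y}}-N\ln\epsilon\leq 2\sqrt{N}\atomn{\m{Y}}\epsilon^{-\frac{1}{2}}$ valid for every $\epsilon>0$, a mild strengthening. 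The larger divergence is in the lower bound: the paper first argues by contradiction that $\tr\sbra{T\sbra{\m{u}_{\epsilon}^*}}\epsilon^{-1}=o\sbra{1}$, expands the log-det accordingly, and then bootstraps through the already-established upper bound to upgrade the error from $o\sbra{\tr\sbra{T\sbra{\m{u}_{\epsilon}^*}}\epsilon^{-1}}$ to $o\sbra{\epsilon^{-\frac{1}{2}}}$---a two-stage asymptotic argument. You instead extract a direct per-eigenvalue estimate $\lambda_{\epsilon,k}\leq\epsilon\sbra{e^{2\sqrt{N}\atomn{\m{Y}}\epsilon^{-1/2}}-1}=O\sbra{\epsilon^{\frac{1}{2}}}$ from nonnegativity of the summands together with your clean upper bound, and feed it into the elementary second-order inequality $\ln\sbra{1+x}\geq x-\frac{1}{2}x^2$, which makes the correction $\frobn{T\sbra{\m{u}_{\epsilon}^*}}^2/\sbra{2\epsilon^2}=O\sbra{\epsilon^{-1}}$. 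The net effect is a quantitative refinement: you actually establish $\cM^{\epsilon}\sbra{\m{Y}}-N\ln\epsilon=2\sqrt{N}\atomn{\m{Y}}\epsilon^{-\frac{1}{2}}+O\sbra{\epsilon^{-1}}$, slightly stronger than the stated equivalence, with explicit constants and no contradiction step; your explicit disposal of the degenerate case $\m{Y}=\m{0}$ (where both sides vanish identically and the equivalence is vacuous) is also a point the paper leaves implicit. Both arguments tacitly assume the minimizer $\m{u}_{\epsilon}^*$ is attained; if one wished to be fully careful on that point, your inequalities pass to near-minimizers without change, just as the paper's would.
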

\begin{proof} See Appendix \ref{sec:append:epsilontoinf}.
\end{proof}

\begin{thm} Let $r=\norm{\m{Y}}_{\cA,0}$ and $\epsilon\rightarrow0$. Then, we have the following results:
\begin{enumerate}
 \item If $r\leq N-1$, then
     \equ{\cM^{\epsilon}\sbra{\m{Y}}\sim \sbra{r-N}\ln\frac{1}{\epsilon}, \label{formu:equivinf}}
     i.e., they are equivalent infinities. Otherwise, $\cM^{\epsilon}\sbra{\m{Y}}$ is a positive constant depending only on $\m{Y}$;
 \item Let $\m{u}_{\epsilon}^*$ be the (global) optimizer of $\m{u}$ to the optimization problem in (\ref{formu:nonconvexrelax}). Then, the smallest $N-r$ eigenvalues of $T\sbra{\m{u}_{\epsilon}^*}$ are either zero or approach zero as fast as $\epsilon$;
 \item For any cluster point of $\m{u}_{\epsilon}^*$ at $\epsilon=0$, denoted by $\m{u}_0^*$, there exists an atomic decomposition $\m{Y}=\sum_{k=1}^r\m{a}\sbra{f_k}\m{s}_k$ such that $T\sbra{\m{u}_0^*}=\sum_{k=1}^r\twon{\m{s}_k}^2\m{a}\sbra{f_k}\m{a}\sbra{f_k}^H$.\footnote{$\m{u}^*$ is called a cluster point of a vector-valued function $\m{u}(x)$ at $x=x_0$ if there exists a sequence $\lbra{x_n}_{n=1}^{+\infty}$, $\lim_{n\rightarrow+\infty}x_n=x_0$, satisfying that $\lim_{n\rightarrow+\infty}\m{u}(x_n)\rightarrow \m{u}^*$.}
\end{enumerate} \label{thm:epsilontozero}
\end{thm}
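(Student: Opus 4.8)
The plan is to sandwich $\cM^{\epsilon}\sbra{\m{Y}}$ between $\sbra{N-r}\ln\epsilon$ and the same quantity up to an $O(1)$ term, and to read the eigenvalue rates and the limiting Vandermonde structure off a cluster‑point analysis of the optimizers. For the upper bound I would take a minimizer $\m{u}_0$ of the rank problem in \eqref{formu:atom0norm}, so that $T\sbra{\m{u}_0}$ is a PSD Toeplitz matrix of rank exactly $r$ whose range contains $\m{Y}$, and plug it into \eqref{formu:nonconvexrelax}. Splitting $\ln\abs{T\sbra{\m{u}_0}+\epsilon\m{I}}=\sum_{k=1}^{r}\ln\sbra{\lambda_k+\epsilon}+\sbra{N-r}\ln\epsilon$ over the eigenvalues, the first sum converges to $\sum_{k=1}^{r}\ln\lambda_k$ and the term $\tr\sbra{\m{Y}^HT\sbra{\m{u}_0}^{-1}\m{Y}}$ is a finite constant precisely because $\m{Y}$ lies in the range of $T\sbra{\m{u}_0}$. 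Hence $\cM^{\epsilon}\sbra{\m{Y}}\leq\sbra{N-r}\ln\epsilon+O(1)=\sbra{r-N}\ln\frac{1}{\epsilon}+O(1)$, which already supplies one side of the equivalence, and when $r=N$ shows that $\cM^{\epsilon}\sbra{\m{Y}}$ stays bounded.

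Next I would show the optimizers $\m{u}_{\epsilon}^*$ stay in a bounded set, so cluster points exist: the upper bound keeps $\ln\abs{T\sbra{\m{u}_{\epsilon}^*}+\epsilon\m{I}}$ bounded above while $\tr\sbra{\m{Y}^HT\sbra{\m{u}_{\epsilon}^*}^{-1}\m{Y}}\geq0$, and since a growing diagonal $u_1=\frac{1}{N}\tr\sbra{T\sbra{\m{u}}}$ would drive the log‑determinant to $+\infty$, the diagonal and hence the whole PSD Toeplitz matrix stay bounded. On a convergent subsequence $\m{u}_{\epsilon_n}^*\to\m{u}_0^*$ I would argue that $T\sbra{\m{u}_0^*}$ has rank exactly $r$: every feasible $\m{u}$ satisfies $\rank\sbra{T\sbra{\m{u}}}\geq r$ (this is the content of the rank characterization \eqref{formu:atom0norm} of $\norm{\m{Y}}_{\cA,0}=r$, since finiteness of the trace term means $\m{Y}\in\text{range}\,T\sbra{\m{u}}$), while the asymptotic upper bound forbids more than $r$ eigenvalues from staying bounded away from zero, because each extra persistent eigenvalue would add a $\ln\sbra{\Theta(1)}$ term and delete a $\ln\epsilon$ term. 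Applying the Vandermonde decomposition lemma to the rank‑$r$ limit gives $T\sbra{\m{u}_0^*}=\sum_{k=1}^{r}p_k\m{a}\sbra{f_k}\m{a}\sbra{f_k}^H$, and using $\m{Y}\in\text{range}\,T\sbra{\m{u}_0^*}$ produces an atomic decomposition $\m{Y}=\sum_{k=1}^{r}\m{a}\sbra{f_k}\m{s}_k$; matching it against the minimized trace term at the limit forces $p_k=\twon{\m{s}_k}^2$, which is part 3. Part 2 then follows from first‑order optimality: in a direction orthogonal to $\m{Y}$ the objective is strictly decreasing in the eigenvalue, so the optimizer pushes each of the $N-r$ smallest eigenvalues down to the smallest value the PSD Toeplitz constraint permits, and balancing the $\frac{1}{\lambda+\epsilon}$ gradient of the log‑determinant against that constraint pins the rate at $\Theta\sbra{\epsilon}$.

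With the eigenstructure in hand, the lower bound for part 1 closes quickly: exactly $r$ eigenvalues of $T\sbra{\m{u}_{\epsilon}^*}$ stay $\Theta(1)$ and contribute bounded log‑terms and a bounded trace term, while the remaining $N-r$ eigenvalues are $\Theta\sbra{\epsilon}$ and each contribute $\ln\epsilon+O(1)$, so $\cM^{\epsilon}\sbra{\m{Y}}=\sbra{N-r}\ln\epsilon+O(1)=\sbra{r-N}\ln\frac{1}{\epsilon}+O(1)$ and the ratio with $\sbra{r-N}\ln\frac{1}{\epsilon}$ tends to $1$. The case $r=N$ is easier: $T\sbra{\m{u}_0^*}$ is full rank, the limiting objective is finite, and being the minimum of $\ln\abs{T\sbra{\m{u}}}+\tr\sbra{\m{Y}^HT\sbra{\m{u}}^{-1}\m{Y}}$ over positive definite Toeplitz matrices, it is a positive constant determined by $\m{Y}$ alone.

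The main obstacle is obtaining the sharp constant $r$ in the lower bound. A direct eigenvalue‑by‑eigenvalue estimate, using only $\ln\sbra{\lambda+\epsilon}+\frac{c}{\lambda}\geq\ln\lambda+\frac{c}{\lambda}\geq\ln c+1$ for the $\m{Y}$‑carrying directions and $\ln\sbra{\lambda+\epsilon}\geq\ln\epsilon$ for the rest, only yields the exponent $\rank\sbra{\m{Y}}$, because it cannot by itself rule out spurious low‑weight atoms in $T\sbra{\m{u}_{\epsilon}^*}$ whose eigenvalues vanish while carrying no $\m{Y}$‑energy. Upgrading $\rank\sbra{\m{Y}}$ to $r=\norm{\m{Y}}_{\cA,0}$ is exactly where the Toeplitz/Vandermonde structure and the feasibility constraint $\m{Y}\in\text{range}\,T\sbra{\m{u}}$ must be combined, and I expect to settle it through the cluster‑point argument above — establishing the limiting rank‑$r$ structure first and deducing part 1 from it — rather than through a single self‑contained inequality.
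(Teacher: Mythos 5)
Your outline reproduces the paper's architecture (upper bound from a rank-$r$ decomposition, compactness of the optimizers, a limiting Vandermonde structure, then a sandwich), but two load-bearing steps fail, and they are exactly the steps the paper resolves through first-order optimality conditions of $\m{u}_{\epsilon}^*$ that your proposal never derives. First, your boundedness argument is invalid: the bound $\ln\abs{T\sbra{\m{u}_{\epsilon}^*}+\epsilon\m{I}}\leq \sbra{N-r}\ln\epsilon+O(1)$ does not bound the matrix, since one eigenvalue may grow like $\epsilon^{-(r-1)}$ while the others sit at $O\sbra{\epsilon}$ and keep the log-determinant within budget; the claim that a growing $u_1$ forces the log-determinant to $+\infty$ is false when other eigenvalues shrink. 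The paper instead writes $\cM^{\epsilon}\sbra{\m{Y}}=\sum_k\ln\abs{\lambda_{\epsilon,k}+\epsilon}+\sum_k \overline{p}_{\epsilon,k}/\lambda_{\epsilon,k}$ and uses stationarity in each eigenvalue to obtain $\overline{p}_{\epsilon,k}=\lambda_{\epsilon,k}^2/\sbra{\lambda_{\epsilon,k}+\epsilon}$ as in \eqref{formu:lambdapbar}, whence $\tr\sbra{T\sbra{\m{u}_{\epsilon}^*}}<\frobn{\m{Y}}^2+N$: boundedness comes from optimality, not from the log-determinant. Second, and more seriously, your route to $\rank\sbra{T\sbra{\m{u}_0^*}}\geq r$ --- ``every feasible $\m{u}$ has rank at least $r$ because $\m{Y}$ lies in the range of $T\sbra{\m{u}}$'' --- does not pass to the limit: rank is only lower semicontinuous, and $\m{Y}$ in the range of $T_n$ with $T_n\rightarrow T_0$ does not give $\m{Y}$ in the range of $T_0$ (take $T_n=\diag\sbra{1,1/n}$ and $\m{Y}$ the second coordinate vector). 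This is precisely the ``vanishing atoms carrying $\m{Y}$-energy'' obstacle you flag at the end, and you defer its resolution to the very cluster-point argument that depends on it, so the crux is left open. The paper closes it with a second stationarity relation: differentiating in the Vandermonde weights (note $\m{a}\sbra{f}\m{a}\sbra{f}^H$ \emph{is} Toeplitz, so this is a feasible perturbation) yields $p_{\epsilon,k}>\twon{\m{s}_{\epsilon,k}}^2$, so any atom whose weight vanishes along the sequence also loses its signal energy; if $\lambda_{\epsilon_j,r}\rightarrow0$ the limit would then furnish an atomic decomposition of $\m{Y}$ of order at most $r-1$, contradicting $\norm{\m{Y}}_{\cA,0}=r$, which gives the uniform bound $\lambda_{\epsilon,r}\geq c>0$ that drives the lower bound in part 1.

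The remaining parts inherit the same omission. For part 2, ``push each small eigenvalue down and balance the $1/\sbra{\lambda+\epsilon}$ gradient'' is not a legitimate argument, because perturbing a single eigenvalue with eigenvectors fixed leaves the Toeplitz cone; the paper avoids this entirely by squeezing between \eqref{formu:Mgeqside} and \eqref{formu:Mleqside}, obtaining $\sum_{k>r}\ln\abs{\lambda_{\epsilon,k}/\epsilon+1}\leq c_1-c_2$ and hence $\lambda_{\epsilon,k}\leq\sbra{e^{c_1-c_2}-1}\epsilon$ directly. For part 3, ``matching against the minimized trace term forces $p_k=\twon{\m{s}_k}^2$'' needs a mechanism: the paper uses \eqref{formu:lambdapbar} to show $\overline{p}_{\epsilon_j,k}\rightarrow\lambda_k$, so that $\tr\sbra{\m{Y}^HT\sbra{\m{u}_0^*}^{-1}\m{Y}}=r$ exactly, and pairs this with $\sum_k\twon{\m{s}_k}^2/p_k\leq r$ (a consequence of $p_k\geq\twon{\m{s}_k}^2$), where equality holds if and only if $p_k=\twon{\m{s}_k}^2$ for all $k$. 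Your upper bound via a rank-$r$ decomposition and your count that at most $r$ eigenvalues can stay bounded away from zero are correct and agree with the paper, but without the two stationarity relations the proposal as it stands proves none of the three parts.
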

\begin{proof} See Appendix \ref{sec:append:epsilontozero}.
\end{proof}

\begin{rem} Note that the term $\ln\frac{1}{\epsilon}$ in \eqref{formu:equivinf} that becomes unbounded as $\epsilon\rightarrow0$ is not problematic in the optimization problem \eqref{formu:criterion3} to introduce, since the objective function $\cM^{\epsilon}\sbra{\m{Y}}$ can be re-scaled by $\sbra{\ln\frac{1}{\epsilon}}^{-1}$ for any $\epsilon>0$ without altering the optimizer. By similar arguments we see that $\epsilon^{-\frac{1}{2}}$ in \eqref{formu:epsilontoinf} is not problematic as well.
\end{rem}

Theorem \ref{thm:epsilontoinf} shows that the new metric $\cM^{\epsilon}\sbra{\m{Y}}$ plays the same role as $\atomn{\m{Y}}$ as $\epsilon\rightarrow+\infty$, while Theorem \ref{thm:epsilontozero} states that it approaches $\norm{\m{Y}}_{\cA,0}$ as $\epsilon\rightarrow0$. Consequently, it bridges $\atomn{\m{Y}}$ and $\norm{\m{Y}}_{\cA,0}$ and is expected to enhance sparsity and resolution compared to $\atomn{\m{Y}}$. Moreover, Theorem \ref{thm:epsilontozero} characterizes the properties of the optimizer $\m{u}_{\epsilon}^*$ as $\epsilon\rightarrow0$ including the convergent speed of the smallest $N-\norm{\m{Y}}_{\cA,0}$ eigenvalues and the limiting form of $T\sbra{\m{u}^*}$ via the Vandermonde decomposition. In fact, we always observe via simulations that the smallest $N-\norm{\m{Y}}_{\cA,0}$ eigenvalues of $T\sbra{\m{u}^*}$ become zero once $\epsilon$ is appropriately small.


\begin{rem} In DOA estimation, a difficult scenario is when the source signals $\lbra{\m{s}_k}_{k=1}^K$ are highly or even completely correlated (the latter case is usually called coherent). For example, covariance-based methods like Capon's beamforming and MUSIC cannot produce satisfactory results since a faithful covariance estimate is unavailable. In contrast, the proposed sparse metric is robust to source correlations by Theorem \ref{thm:epsilontozero} in which we have not made any assumption for the sources.
\end{rem}

\begin{rem} According to Theorem \ref{thm:epsilontozero}, the solution $T\sbra{\m{u}^*}$ can be interpreted as the data covariance of $\m{Y}$ after removing correlations among the sources. \label{rem:datacov}
\end{rem}

\section{Reweighted Atomic-Norm Minimization (RAM)} \label{sec:RAM}

\subsection{A Locally Convergent Iterative Algorithm}
Using the proposed sparse metric $\cM^{\epsilon}\sbra{\m{Y}}$ we solve the following optimization problem for signal and frequency recovery:
\equ{\begin{split}
\min_{\m{Y}\in\cS}\cM^{\epsilon}\sbra{\m{Y}}, \end{split} \label{formu:criterion3}}
or equivalently,
\equ{\begin{split}
&\min_{\m{Y}\in\cS,\m{u}} \ln\abs{T\sbra{\m{u}}+\epsilon\m{I}} + \tr\sbra{\m{Y}^HT\sbra{\m{u}}^{-1}\m{Y}},\\
&\st T\sbra{\m{u}}\geq\m{0}. \end{split} \label{formu:problem}}
Note that $\ln\abs{T\sbra{\m{u}}+\epsilon\m{I}}$ is a concave function of $\m{u}$ since $\ln\abs{\m{R}}$ is a concave function of $\m{R}$ on the positive semidefinite cone \cite{boyd2004convex}. It follows that the problem in \eqref{formu:problem} is nonconvex and no efficient algorithms can guarantee to obtain the global optimizer. A popular locally convergent approach to minimization of such a concave $+$ convex function is the majorization-maximization (MM) algorithm (see, e.g., \cite{fazel2003log}). Let $\m{u}_j$ denote the $j$th iterate of the optimization variable $\m{u}$. Then, at the $\sbra{j+1}$th iteration we replace $\ln\abs{T\sbra{\m{u}}+\epsilon\m{I}}$ by its tangent plane at the current value $\m{u}=\m{u}_j$:
\equ{\begin{split}
&\ln\abs{T\sbra{\m{u}_j}+\epsilon\m{I}} + \tr\mbra{\sbra{T\sbra{\m{u}_j}+\epsilon\m{I}}^{-1}T\sbra{\m{u}-\m{u}_j}}\\
&= \tr\mbra{\sbra{T\sbra{\m{u}_j}+\epsilon\m{I}}^{-1}T\sbra{\m{u}}} + c_j, \end{split}}
where $c_j$ is a constant independent of $\m{u}$. As a result, the optimization problem at the $\sbra{j+1}$th iteration becomes
\equ{\begin{split}
&\min_{\m{Y}\in\cS,\m{u}} \tr\mbra{\sbra{T\sbra{\m{u}_j}+\epsilon\m{I}}^{-1}T\sbra{\m{u}}} + \tr\sbra{\m{Y}^HT\sbra{\m{u}}^{-1}\m{Y}},\\
&\st T\sbra{\m{u}}\geq\m{0}. \end{split} \label{formu:problem_j}}
Since $\ln\abs{T\sbra{\m{u}}+\epsilon\m{I}}$ is strictly concave in $\m{u}$, at each iteration its value decreases by an amount greater than the decrease of its tangent plane. It follows that by iteratively solving \eqref{formu:problem_j} the objective function in (\ref{formu:problem}) monotonically decreases and converges to a local minimum.

\subsection{Interpretation as RAM }\label{sec:WAN_SDP}
To interpret the optimization problem in (\ref{formu:problem_j}), let us define a weighted continuous dictionary
\equ{\cA^w\triangleq\lbra{\m{a}^w\sbra{f}= w\sbra{f}\m{a}\sbra{f}:\; f\in\bT}}
w.r.t. the original continuous dictionary $\lbra{\m{a}\sbra{f}:\; f\in\bT}$, where $w\sbra{f}\geq 0$ is a weighting function. For $\m{Y}\in\bC^{N\times L}$, we define its weighted atomic norm w.r.t. $\cA^w$ as its atomic norm induced by $\cA^w$:
\equ{\begin{split}
\norm{\m{Y}}_{\cA^w}
&\triangleq\inf_{f_k,\m{s}^w_k}\lbra{\sum_k \twon{\m{s}^w_k}: \m{Y} = \sum_k \m{a}^w\sbra{f_k}\m{s}^w_k}\\
&=\inf_{f_k,\m{s}_k}\lbra{\sum_k \frac{\twon{\m{s}_k}}{w\sbra{f_k}}: \m{Y} = \sum_k \m{a}\sbra{f_k}\m{s}_k }. \end{split}}
According to the definition above, $w\sbra{f}$ specifies preference of the atoms $\lbra{\m{a}\sbra{f}}$. To be specific, an atom $\m{a}\sbra{f_0}$, $f_0\in\bT$, is more likely selected if $w\sbra{f_0}$ is larger. Moreover, the atomic norm is a special case of the weighted atomic norm with a constant weighting function (i.e., without any preference). Similar to the atomic norm, the proposed weighted atomic norm also admits a semidefinite formulation when assigned an appropriate weighting function, which is stated in the following theorem.

\begin{thm} Suppose that $w\sbra{f}= \frac{1}{\sqrt{\m{a}\sbra{f}^H\m{W}\m{a}\sbra{f}}}$ with $\m{W}\in\bC^{N\times N}$. Then,
\equ{\begin{split}
\norm{\m{Y}}_{\cA^w}=&\min_{\m{u}} \frac{\sqrt{N}}{2}\tr\sbra{\m{W}T\sbra{\m{u}}} + \frac{1}{2\sqrt{N}}\tr\sbra{\m{Y}^HT\sbra{\m{u}}^{-1}\m{Y}},\\
&\st T\sbra{\m{u}}\geq\m{0}. \end{split}} \label{thm:weightAN}
\end{thm}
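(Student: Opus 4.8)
The plan is to prove the claimed identity by establishing the two inequalities $g\sbra{\m{Y}}\leq\norm{\m{Y}}_{\cA^w}$ and $g\sbra{\m{Y}}\geq\norm{\m{Y}}_{\cA^w}$, where $g\sbra{\m{Y}}$ denotes the optimal value of the semidefinite program on the right-hand side. This mirrors the proof of the unweighted characterization \eqref{formu:AN_SDP}, except that the weight $\m{W}$ is carried through every step. The computational engine in both directions is an atom-by-atom factorization: whenever $T\sbra{\m{u}}=\sum_k c_k\m{a}\sbra{f_k}\m{a}\sbra{f_k}^H$ with $c_k>0$ and $\m{Y}=\sum_k\m{a}\sbra{f_k}\m{s}_k$, the cyclic property of the trace together with the defining relation $\m{a}\sbra{f}^H\m{W}\m{a}\sbra{f}=w\sbra{f}^{-2}$ gives $\tr\sbra{\m{W}T\sbra{\m{u}}}=\sum_k c_k w\sbra{f_k}^{-2}$, while $\tr\sbra{\m{Y}^HT\sbra{\m{u}}^{-1}\m{Y}}\leq\sum_k\twon{\m{s}_k}^2c_k^{-1}$, with equality when the vectors $\lbra{\m{a}\sbra{f_k}}$ are linearly independent.

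For $g\sbra{\m{Y}}\leq\norm{\m{Y}}_{\cA^w}$, I would fix an arbitrary decomposition $\m{Y}=\sum_k\m{a}\sbra{f_k}\m{s}_k$ and construct the feasible point $T\sbra{\m{u}}=\sum_k c_k\m{a}\sbra{f_k}\m{a}\sbra{f_k}^H$, which is automatically PSD and Toeplitz. Substituting the factored expressions bounds the objective by $\sum_k\mbra{\frac{\sqrt N}{2}c_k w\sbra{f_k}^{-2} + \frac{1}{2\sqrt N}\twon{\m{s}_k}^2c_k^{-1}}$; minimizing each summand over $c_k>0$ by the arithmetic-geometric mean inequality (the two pieces balance at $c_k=w\sbra{f_k}\twon{\m{s}_k}/\sqrt N$) collapses it to $\sum_k\twon{\m{s}_k}/w\sbra{f_k}$. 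Taking the infimum over all decompositions yields the inequality. It is important that the factorization is needed here only as an upper bound $\tr\sbra{\m{Y}^HT\sbra{\m{u}}^{-1}\m{Y}}\leq\sum_k\twon{\m{s}_k}^2c_k^{-1}$, which remains valid even when the chosen atoms are linearly dependent; this is what permits the argument to range over every decomposition appearing in the defining infimum.

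For the reverse inequality $g\sbra{\m{Y}}\geq\norm{\m{Y}}_{\cA^w}$, I would take any feasible $T\sbra{\m{u}}\geq\m{0}$ (finiteness of the trace term forces $\m{Y}$ into its range) and apply the Vandermonde decomposition $T\sbra{\m{u}}=\sum_{k=1}^r p_k\m{a}\sbra{f_k}\m{a}\sbra{f_k}^H$ with $r=\rank\sbra{T\sbra{\m{u}}}\leq N$, distinct $f_k$, and $p_k>0$. Since $\m{Y}$ lies in the range of $T\sbra{\m{u}}$, which equals the column space of $\mbra{\m{a}\sbra{f_1},\dots,\m{a}\sbra{f_r}}$, it admits a decomposition $\m{Y}=\sum_k\m{a}\sbra{f_k}\m{s}_k$ over exactly these frequencies; and because distinct Vandermonde vectors with $r\leq N$ are linearly independent, the factorization now holds with equality. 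Hence the objective equals $\sum_k\mbra{\frac{\sqrt N}{2}p_k w\sbra{f_k}^{-2} + \frac{1}{2\sqrt N}\twon{\m{s}_k}^2 p_k^{-1}}$, which by the arithmetic-geometric mean inequality is at least $\sum_k\twon{\m{s}_k}/w\sbra{f_k}\geq\norm{\m{Y}}_{\cA^w}$, the final step holding because the exhibited decomposition is admissible in the defining infimum. Minimizing over feasible $T\sbra{\m{u}}$ completes this direction, and the two inequalities together give equality.

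The step I expect to be the main obstacle is the careful handling of $T\sbra{\m{u}}^{-1}$ when $T\sbra{\m{u}}$ is rank deficient: one must interpret $\tr\sbra{\m{Y}^HT\sbra{\m{u}}^{-1}\m{Y}}$ through the Schur-complement/pseudoinverse convention already fixed in the excerpt, confirm that it is finite exactly when $\m{Y}$ is in the range of $T\sbra{\m{u}}$, and establish the factorization as an \emph{equality} in the reverse direction but only as an \emph{inequality} in the forward direction. The cleanest route is to write $\m{Y}=\sbra{\m{A}\diag\sbra{\sqrt{c}}}\sbra{\diag\sbra{1/\sqrt{c}}\m{S}}=\m{G}\m{H}$ with $T\sbra{\m{u}}=\m{G}\m{G}^H$, and to observe that $\m{G}^H\sbra{\m{G}\m{G}^H}^{+}\m{G}$ is the orthogonal projector onto the row space of $\m{G}$; since a projector is dominated by the identity, $\tr\sbra{\m{Y}^HT\sbra{\m{u}}^{+}\m{Y}}=\tr\sbra{\m{H}^H\m{G}^H\sbra{\m{G}\m{G}^H}^{+}\m{G}\m{H}}\leq\tr\sbra{\m{H}^H\m{H}}=\sum_k\twon{\m{s}_k}^2c_k^{-1}$, with equality precisely when $\m{G}$ (equivalently $\m{A}$) has full column rank. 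Everything else reduces to bookkeeping with the weight $\m{W}$ and the scalar arithmetic-geometric mean inequality.
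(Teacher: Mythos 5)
Your proof is correct, and its computational core --- the Vandermonde decomposition of the PSD Toeplitz matrix together with the scalar arithmetic--geometric mean balancing of $\frac{\sqrt N}{2}w\sbra{f_k}^{-2}p_k$ against $\frac{1}{2\sqrt N}\twon{\m{s}_k}^2p_k^{-1}$ --- is exactly the paper's. The difference is organizational and in one supporting lemma: the paper (Appendix~C) proves the theorem as a single chain of equalities, importing the key identity $\tr\sbra{\m{Y}^H\m{R}^{-1}\m{Y}}=\min\lbra{\sum_k\twon{\m{s}_k}^2p_k^{-1}:\m{Y}=\sum_k\m{a}\sbra{f_k}\m{s}_k}$ from \cite{yang2014exact} without proof, whereas you split the claim into two inequalities and establish the needed facts from scratch via the pseudoinverse/projector argument ($\m{G}^H\sbra{\m{G}\m{G}^H}^{+}\m{G}\preceq\m{I}$, with equality when $\m{G}$ has full column rank). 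This buys two things: your forward direction needs only the \emph{inequality} form of the identity, so it correctly covers arbitrary decompositions in the defining infimum (possibly more than $N$, linearly dependent atoms), and your reverse direction gets an exact equality because the $r\leq N$ distinct Vandermonde atoms are independent --- so the rank-deficient case, which the paper dispatches in a one-sentence closing remark (``whenever $T(\m{u})$ is invertible or not''), is handled explicitly in your version; the cost is length, since the paper's chain is shorter by leaning on the cited lemma. Two pedantic points that affect neither proof: when some $\m{s}_k=\m{0}$ your balancing choice $c_k=w\sbra{f_k}\twon{\m{s}_k}/\sqrt N$ degenerates, so those atoms should simply be dropped; and for full-rank $T\sbra{\m{u}}$ the order-$N$ Vandermonde decomposition exists but is non-unique, which is all your argument requires.
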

\begin{proof} See Appendix \ref{sec:Append_weightAN}.
\end{proof}

Let $\m{W}_j=\frac{1}{N}\sbra{T\sbra{\m{u}_j}+\epsilon\m{I}}^{-1}$ and $w_j\sbra{f} = \frac{1}{\sqrt{\m{a}\sbra{f}^H \m{W}_j \m{a}\sbra{f}}}$. It follows from Theorem \ref{thm:weightAN} that the optimization problem in (\ref{formu:problem_j}) can be exactly written as the following weighted atomic norm minimization problem:
\equ{\begin{split}
\min_{\m{Y}\in\cS} \norm{\m{Y}}_{\cA^{w_j}}. \end{split} \label{formu:WAN_j}}
As a result, the proposed iterative algorithm can be interpreted as reweighted atomic-norm minimization (RAM), where the weighting function is updated based on the latest solution of $\m{u}$. If we let $w_0(f)$ be a constant function or equivalently, $\m{u}_0=\m{0}$, such that no preference of the atoms is specified at the first iteration, then the first iteration coincides with ANM. From the second iteration on, the preference is defined by the weighting function $w_j\sbra{f}$ given above. Note that $w_j^2(f)$ is in fact the power spectrum of Capon's beamforming provided that $T\sbra{\m{u}_j}$ is interpreted as the noiseless data covariance following from Remark \ref{rem:datacov} and $\epsilon$ as the noise variance. Therefore, the reweighting strategy makes the frequencies around those produced by the current iteration preferable at the next iteration and thus enhances sparsity. At the same time, the preference results in finer details of the frequency spectrum in those areas and therefore enhances resolution. Empirical evidences will be provided in Section \ref{sec:simulation}.

\section{Computationally Efficient Implementations} \label{sec:implementation}

\subsection{Optimization Using Standard SDP Solver}
At each iteration of RAM, we need to solve the SDP in (\ref{formu:problem_j}) as follows:
\equ{\begin{split}
&\min_{\m{Y}\in\cS,\m{u},\m{X}} \tr\sbra{\m{W}T\sbra{\m{u}}} +\tr\sbra{\m{X}}, \\
&\st \begin{bmatrix}\m{X} & \m{Y}^H \\ \m{Y} & T\sbra{\m{u}}\end{bmatrix} \geq\m{0},
\end{split} \label{formu:weightedSDP}}
where $\m{W}=\sbra{T\sbra{\m{u}_j}+\epsilon\m{I}}^{-1}$.
Its dual problem is given as follows by a standard Lagrangian analysis (see Appendix \ref{sec:append_duality}):
\equ{\begin{split}
&\min_{\m{V},\m{Z}} 2\eta\frobn{\m{V}_{\m{\Omega}}} + 2 \Re\tr\sbra{\m{Y}_{\m{\Omega}}^{oH}\m{V}_{\m{\Omega}} },\\
& \st \begin{bmatrix}\m{I} & \m{V}^H \\ \m{V} & \m{Z}\end{bmatrix} \geq\m{0}, \; \m{V}_{\overline{\m{\Omega}}} = \m{0}, \\
& \phantom{\st} \sum_{n=1}^{N-j}Z_{n,n+j} = \sum_{n=1}^{N-j}W_{n,n+j}, \;\; j=0,\dots,N-1,\\
\end{split} \label{formu:duality}}
where $\Re$ takes the real part of the argument and $Z_{n,j}$ denotes the $(n,j)$th entry of $\m{Z}$. We empirically find that the dual problem (\ref{formu:duality}) can be solved more efficiently than the primal problem (\ref{formu:weightedSDP}) with a standard SDP solver SDPT3 \cite{toh1999sdpt3}. Note that the optimizer to (\ref{formu:weightedSDP}) is given for free via duality when we solve (\ref{formu:duality}). As a result, the reweighted algorithm can be iteratively implemented.


\subsection{A First-order Algorithm via ADMM}
A reasonably fast approach for ANM is based on ADMM \cite{boyd2011distributed,bhaskar2013atomic,yang2015gridless}, which is a first-order algorithm and guarantees global optimality. To derive the ADMM algorithm, we reformulate the SDP in (\ref{formu:weightedSDP}) as follows:
\equ{\begin{split}
&\min_{\m{u},\m{X},\m{Y}\in\cS, \m{\cQ}\geq\m{0}} \tr\sbra{\m{W}T\sbra{\m{u}}} +\tr\sbra{\m{X}}, \\
&\st \m{\cQ} = \begin{bmatrix}\m{X} & \m{Y}^H \\ \m{Y} & T\sbra{\m{u}}\end{bmatrix},
\end{split} \label{formu:weightedSDP2}}
which is very similar to the SDP solved in \cite{yang2015gridless}.
Then we can write the augmented Lagrangian function and iteratively update $\sbra{\m{u},\m{X},\m{Y}}$, $\m{\cQ}$ and the Lagrangian multiplier in closed-form expressions until convergence. We omit the details since all the formulae and derivations are similar to those in \cite{yang2015gridless}, to which interested readers are referred. We mention that an eigen-decomposition of a matrix of order $N+L$ (the order of $\m{\cQ}$) is required at each iteration. Note that the ADMM converges slowly to an extremely accurate solution while moderate accuracy is typically sufficient in practical applications \cite{boyd2011distributed}.


\subsection{Dimension Reduction for Large $L$}
The number of measurement vectors $L$ can be large, possibly with $L\gg M$, in DOA estimation, which increases considerably the computational workload. We provide the following result to reduce this number from $L$ to $\rank\sbra{\m{Y}_{\m{\Omega}}^o}\leq \min(L,M)$.

\begin{prop} Let $r=\rank\sbra{\m{Y}_{\m{\Omega}}^o}\leq \min(L,M)$. Find a unitary matrix $\m{Q}\in\bC^{L\times L}$ (for example, by QR decomposition) satisfying that $\m{Y}_{\m{\Omega}}^{o}\m{Q}= \begin{bmatrix}\m{Y}_{\m{\Omega}}^{o}\m{Q}_1 & \m{0}\end{bmatrix}$,
where $\m{Q}_1\in\bC^{L\times r}$. If we make the substitutions $\m{Y}_{\m{\Omega}}^{o}\rightarrow\m{Y}_{\m{\Omega}}^{o}\m{Q}_1$ and $\m{Y}\in\bC^{N\times L}\rightarrow\m{Z}\in\bC^{N\times r}$ in (\ref{formu:problem_j}) and denote by $\sbra{\m{Z}^*,\m{u}^*}$ the optimizer to the resulting optimization problem, then the optimizer to (\ref{formu:problem_j}) is given by $\sbra{\m{Z}^*\m{Q}_1^H,\m{u}^*}$. The same result holds for the nonconvex optimization problem in (\ref{formu:criterion3}). \label{prop:dimreduce}
\end{prop}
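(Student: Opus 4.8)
The plan is to exploit the fact that $\m{Y}$ enters the problem \eqref{formu:problem_j} only through the quadratic objective term $\tr\sbra{\m{Y}^H T\sbra{\m{u}}^{-1}\m{Y}}$ and the membership $\m{Y}\in\cS$, both of which are invariant under right multiplication of $\m{Y}$ by a unitary matrix. First I would verify this invariance: for any unitary $\m{Q}\in\bC^{L\times L}$, the Frobenius norm gives $\frobn{\m{Y}_{\m{\Omega}}\m{Q}-\m{Y}_{\m{\Omega}}^o\m{Q}}=\frobn{\sbra{\m{Y}_{\m{\Omega}}-\m{Y}_{\m{\Omega}}^o}\m{Q}}=\frobn{\m{Y}_{\m{\Omega}}-\m{Y}_{\m{\Omega}}^o}$, and the cyclic property of the trace together with $\m{Q}\m{Q}^H=\m{I}$ gives $\tr\sbra{\m{Q}^H\m{Y}^H T\sbra{\m{u}}^{-1}\m{Y}\m{Q}}=\tr\sbra{\m{Y}^H T\sbra{\m{u}}^{-1}\m{Y}}$. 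Hence the map $\sbra{\m{Y},\m{u}}\mapsto\sbra{\m{Y}\m{Q},\m{u}}$ is a bijection between feasible points of \eqref{formu:problem_j} (with data $\m{Y}_{\m{\Omega}}^o$) and those of the same problem with the data replaced by $\m{Y}_{\m{\Omega}}^o\m{Q}$, and it preserves the objective value; in particular it maps global optimizers to global optimizers.

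Next, choosing $\m{Q}$ as in the statement so that $\m{Y}_{\m{\Omega}}^o\m{Q}=\begin{bmatrix}\m{Y}_{\m{\Omega}}^o\m{Q}_1 & \m{0}\end{bmatrix}$, I would partition the transformed variable as $\m{Y}\m{Q}=\begin{bmatrix}\m{Z} & \m{Y}_2\end{bmatrix}$ with $\m{Z}\in\bC^{N\times r}$ and $\m{Y}_2\in\bC^{N\times\sbra{L-r}}$. Both the objective and the constraint decouple over the two column blocks: the quadratic term splits as $\tr\sbra{\m{Z}^H T\sbra{\m{u}}^{-1}\m{Z}}+\tr\sbra{\m{Y}_2^H T\sbra{\m{u}}^{-1}\m{Y}_2}$ (the cross blocks do not contribute to the trace), and the transformed fidelity constraint reads $\frobn{\m{Z}_{\m{\Omega}}-\m{Y}_{\m{\Omega}}^o\m{Q}_1}^2+\frobn{\sbra{\m{Y}_2}_{\m{\Omega}}}^2\leq\eta^2$. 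The key point is that $\m{Y}_2$ appears only through the two nonnegative quantities $\tr\sbra{\m{Y}_2^H T\sbra{\m{u}}^{-1}\m{Y}_2}\geq 0$ and $\frobn{\sbra{\m{Y}_2}_{\m{\Omega}}}^2\geq 0$. Setting $\m{Y}_2=\m{0}$ therefore simultaneously attains the minimum $0$ of the former and slackens the latter, thus enlarging the budget available to $\m{Z}$; consequently no optimizer is lost by restricting to $\m{Y}_2=\m{0}$. With this restriction the residual problem in $\sbra{\m{Z},\m{u}}$ is exactly \eqref{formu:problem_j} under the stated substitutions, so its optimizer is $\sbra{\m{Z}^*,\m{u}^*}$. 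Undoing the change of variables gives the optimizer of the original problem as $\begin{bmatrix}\m{Z}^* & \m{0}\end{bmatrix}\m{Q}^H=\m{Z}^*\m{Q}_1^H$, which is the claim.

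For the nonconvex problem \eqref{formu:criterion3} I would repeat the identical argument on its equivalent joint form \eqref{formu:problem}, noting that there too the only dependence on $\m{Y}$ is through $\tr\sbra{\m{Y}^H T\sbra{\m{u}}^{-1}\m{Y}}$ and $\m{Y}\in\cS$; the unitary invariance, the block separation, and the $\m{Y}_2=\m{0}$ reduction carry over verbatim, with $\m{u}$ playing the role of an inner minimization variable common to both blocks.

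The one step requiring care is the block separation when $T\sbra{\m{u}}$ is rank-deficient, since then $\tr\sbra{\m{Y}^H T\sbra{\m{u}}^{-1}\m{Y}}$ must be interpreted via the PSD characterization used in the paper (finite precisely when the columns of $\m{Y}$ lie in the range of $T\sbra{\m{u}}$, and $+\infty$ otherwise). I would check that, in this extended-real-valued sense, $\m{Y}_2\mapsto\tr\sbra{\m{Y}_2^H T\sbra{\m{u}}^{-1}\m{Y}_2}$ remains a separable nonnegative quadratic form minimized at $\m{Y}_2=\m{0}$, so that dropping $\m{Y}_2$ is still without loss; the remainder is routine linear-algebraic bookkeeping.
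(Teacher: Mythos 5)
Your proof is correct and follows essentially the same route as the paper's: the paper likewise fixes the quadratic form (with a PSD matrix $\m{C}$ standing in for $T\sbra{\m{u}}^{-1}$, so the argument applies for each fixed $\m{u}$ and hence to both \eqref{formu:problem_j} and \eqref{formu:criterion3}), performs the unitary change of variables, splits the objective and the fidelity constraint over the two column blocks, sets the second block to zero, and undoes the change of variables to get $\m{Z}^*\m{Q}_1^H$. Your added remark on the extended-real interpretation of $\tr\sbra{\m{Y}^H T\sbra{\m{u}}^{-1}\m{Y}}$ when $T\sbra{\m{u}}$ is rank-deficient is a small refinement the paper leaves implicit, not a different argument.
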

\begin{proof} See Appendix \ref{sec:Append_dimreduce}.
\end{proof}

\begin{rem} If only the frequencies are of interest, e.g., in DOA estimation, we can replace $\m{Y}_{\m{\Omega}}^{o}\m{Q}_1$ by $\sbra{\m{Y}_{\m{\Omega}}^{o}\m{Y}_{\m{\Omega}}^{oH}}^{\frac{1}{2}}$ (in fact, any matrix $\widetilde{\m{Y}}$ satisfying that $\widetilde{\m{Y}}\widetilde{\m{Y}}^H=\m{Y}_{\m{\Omega}}^{o}\m{Y}_{\m{\Omega}}^{oH}$) to obtain the same solution of $\m{u}$. \label{rem:sameusolution}
\end{rem}

\begin{rem} With a similar proof, the dimension reduction technique in Proposition \ref{prop:dimreduce} can be extended to a more general linear model with observations expressed by $\m{\Phi}\m{Y}^o$ + noise, where $\m{\Phi}$ denotes a sensing matrix. Also, it can be applied to conventional discrete dictionary models, in which, for example, the $\ell_{2,p}$ norm, $0\leq p\leq1$, needs to be optimized. Note that the dimension reduction approach introduced here is different from that in \cite{malioutov2005sparse}. In particular, the approach in \cite{malioutov2005sparse} requires the knowledge of the model order $K$, which is unknown in practical scenarios, and gives an optimization problem which is an approximation of the original one. In contrast, our approach does not need $K$ but produces a dimension-reduced, equivalent problem.
\end{rem}

By Proposition \ref{prop:dimreduce}, when $L>M$ we can reduce the order of the PSD matrix in (\ref{formu:weightedSDP}) [and (\ref{formu:duality}), (\ref{formu:weightedSDP2})] from $N+L$ to $N+r\leq N+M$. Therefore, the resulting problem dimension depends only on $M$ and $N$. Both the SDPT3 and ADMM implementations of RAM above can be reasonably fast when $M$ and $N$ are small though they may not possess good scalability, especially for SDPT3. An application at hand is DOA estimation in which the array size $M$ and aperture $N$ are typically small (on the order of 10) though $L$ can be a few hundred or even greater. Note also that the dimension reduction technique takes $O\sbra{M^2L}$ flops in DOA estimation according to Remark \ref{rem:sameusolution}. Extensive numerical simulations will be provided in Section \ref{sec:simulation} to demonstrate usefulness of our method.



\subsection{Remarks on Algorithm Implementation}
According to the discussions in Section \ref{sec:WAN_SDP}, we can always start with $\m{u}_0=\m{0}$ and the first iteration coincides with ANM. When $L$ is large, we can also implement a weighting function in the first iteration inspired by Capon's beamforming for faster convergence (see an example in Section \ref{sec:simulation_DOA}). Moreover, we gradually decrease $\epsilon$ during the algorithm and define the weighting function using the latest solution for avoiding local minima (note that the first iteration with $\m{u}_0=\m{0}$ essentially corresponds to $\epsilon=+\infty$ following from Theorem \ref{thm:epsilontoinf}). In fact, this is like an aggressive continuation strategy in which we attempt to solve the nonconvex optimization problem in (\ref{formu:criterion3}) at decreasing values of $\epsilon$. The convergence of the reweighted algorithm is retained if we fix $\epsilon$ when it is sufficiently small. In the ADMM implementation, we can further accelerate the algorithm by adopting loose convergence criteria in the first few iterations of RAM. Finally, note that we need to trade off the algorithm performance for computational time by keeping the number of iterations of RAM being small.

\section{Numerical Simulations} \label{sec:simulation}
\subsection{Implementation Details of RAM}
In our implementation of RAM, we first scale the measurements and the noise such that $\frobn{\m{Y}_{\m{\Omega}}}^2=M$ (the noise energy becomes $\eta'^2=\frac{M\eta^2}{\frobn{\m{Y}}^2}$) and compensate the recovery afterwards. We start with $\m{u}_0=\m{0}$ and $\epsilon=1$ as default. We halve $\epsilon$ when beginning a new iteration until $\epsilon=\frac{1}{2^{10}}$ or $\epsilon<\frac{\eta'^2}{10}$. When $\eta=0$ we terminate RAM if the relative change (in the Frobenius norm) of the solution $\m{Y}^*$ at two consecutive iterations is less than $10^{-6}$ or the maximum number of iterations, set to 20, is reached. All simulations are carried out in Matlab v.8.1.0 on a PC with a Windows 7 system and a 3.4 GHz CPU.

\subsection{An Illustrative Example}
We provide a simple example in this subsection to illustrate the iterative process of RAM. In particular, we consider a sparse frequency spectrum consisting of $K=5$ spikes located at 0.1, 0.108, 0.125, 0.2 and 0.5. We randomly generate the complex amplitudes and randomly select $M=30$ samples among $N=64$ consecutive time-space measurements, with $L=1$. Then, we run the RAM algorithm to reconstruct the frequency spectrum from the samples. Note that the first three frequencies are mutually separated by only about $\frac{0.51}{N}$ and $\frac{1.09}{N}$. Implemented with SDPT3 the RAM algorithm converges in four iterations. We plot the simulation results in Fig. \ref{Fig:illusexample}, where the first subfigure presents variation of the eigenvalues of $T\sbra{\m{u}^*}$ during the iterations, the second row presents the recovered spectra of the first three iterations, and the last row plots the weighting functions used in the first three iterations. Note that the first iteration, which exploits a constant weighting function and coincides with ANM, can detect a rough area where the first three spikes are located but cannot accurately determine their locations and number. In the second iteration, a weighting function is implemented based on the previous estimate to provide preference of the frequencies around those produced in the first iteration. As a result, the third spike is identified while the first two are still not. Following from the same reweighting process, all the frequencies are correctly determined in the next iteration and the algorithm converges after that. It is worth noting that $T\sbra{\m{u}^*}$ becomes rank-5 and the remaining eigenvalues become zero (within numerical precision) after three iterations, where $\epsilon=0.25$. Finally, we report that the relative mean squared error (MSE) of signal recovery improves during the iterations from $1.28\times10^{-4}$ to $2.01\times 10^{-7}$, $3.16\times10^{-19}$ and $3.01\times10^{-21}$. Each iteration takes about 1.7s.

\begin{figure*}
\centering
  \includegraphics[width=6in]{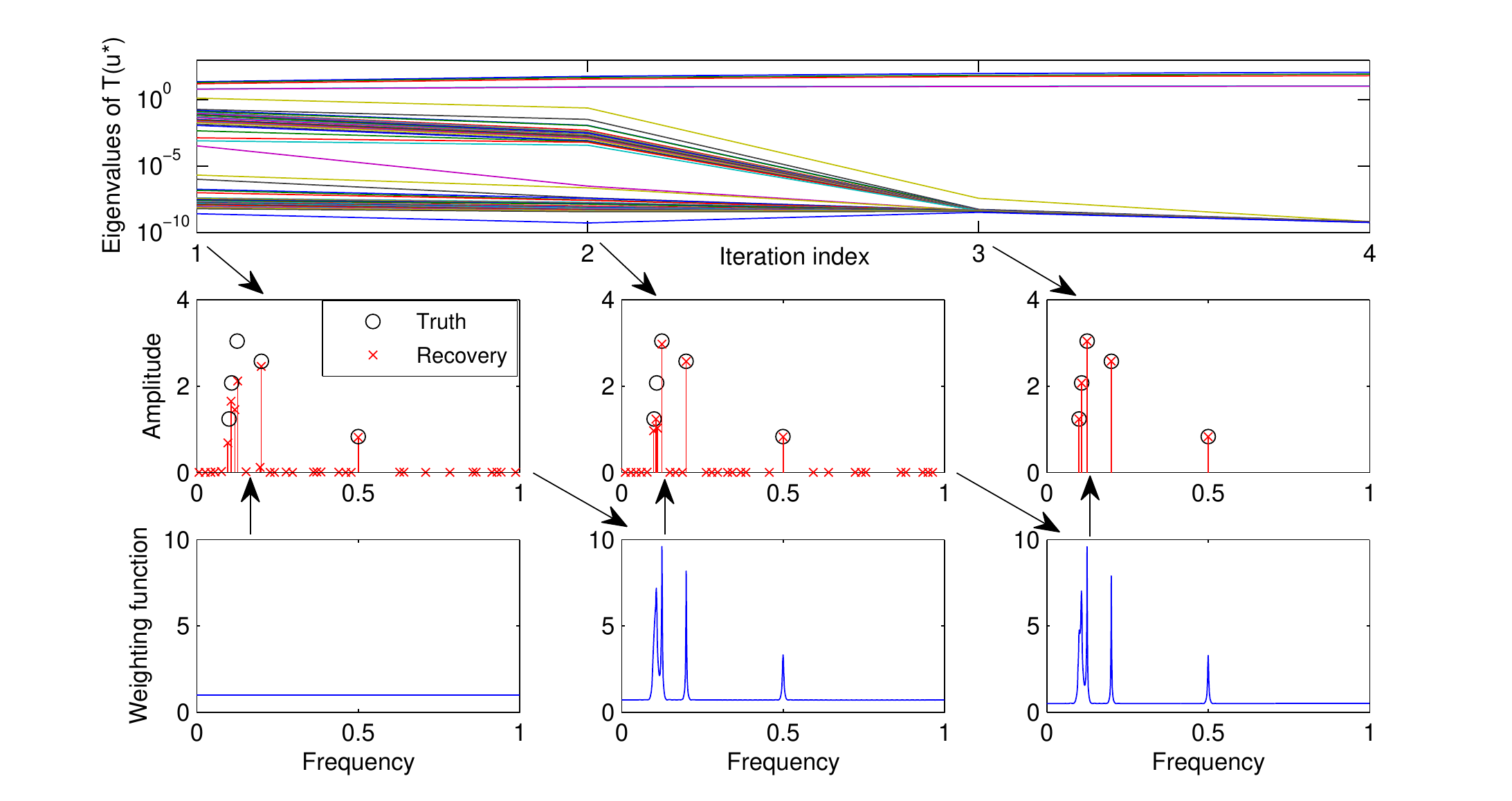}
\centering
\caption{An illustrative example of RAM. Some settings include $N=64$, $M=30$, $K=5$ with frequencies located at 0.1, 0.108, 0.125, 0.2 and 0.5. The first row presents variation of eigenvalues of $T\sbra{\m{u}^*}$ w.r.t. the iteration index. Only 5 eigenvalues remain nonzero (within numerical precision) after 3 iterations. The second row presents the recovered spectra of the first 3 iterations. The last row plots the weighting functions used in the first 3 iterations to produce the spectra. Note that the first iteration coincides with the ANM in which a constant weighting function is used.} \label{Fig:illusexample}
\end{figure*}


\subsection{Sparsity-Separation Phase Transition}
In this subsection, we study the success rate of RAM in signal and frequency recovery compared to ANM. In particular, we fix $N=64$ and $M=30$ with the sampling index set $\m{\Omega}$ being generated uniformly at random. We vary the duo $\sbra{K,\Delta_f}$ and for each combination we randomly generate $K$ frequencies such that they are mutually separated by at least $\Delta_f$. We randomly generate the amplitudes $\lbra{s_{kt}}$ independently and identically from a standard complex normal distribution. After obtaining the noiseless samples, we carry out signal reconstruction and frequency recovery using ANM and RAM, both implemented by SDPT3. The recovery is called successful if both the relative MSE of signal recovery and the MSE of frequency recovery are less than $10^{-12}$. For each combination $\sbra{K,\Delta_f}$, the success rate is measured over 20 Monte Carlo runs.

We plot the success rates of ANM and RAM with $L=1$ in Fig. \ref{Fig:phasetrans_1}, where it is shown that successful recovery can be obtained with more ease in the case of a smaller $K$ and a larger frequency separation $\Delta_f$, leading to a phase transition in the sparsity-separation domain. By comparing the two images, we see that RAM significantly enlarges the success phase and enhances sparsity and resolution. It is worth noting that the phase transitions of both ANM and RAM are not sharp. One reason is that, a set of \emph{well separated} frequencies can be possibly generated at a small value of $\Delta_f$ while we only control that the frequencies are separated by \emph{at least} $\Delta_f$. It is also observed that RAM tends to converge in less iterations with a smaller $K$ and a larger $\Delta_f$.

\begin{figure}
\centering
\includegraphics[width=1.65in]{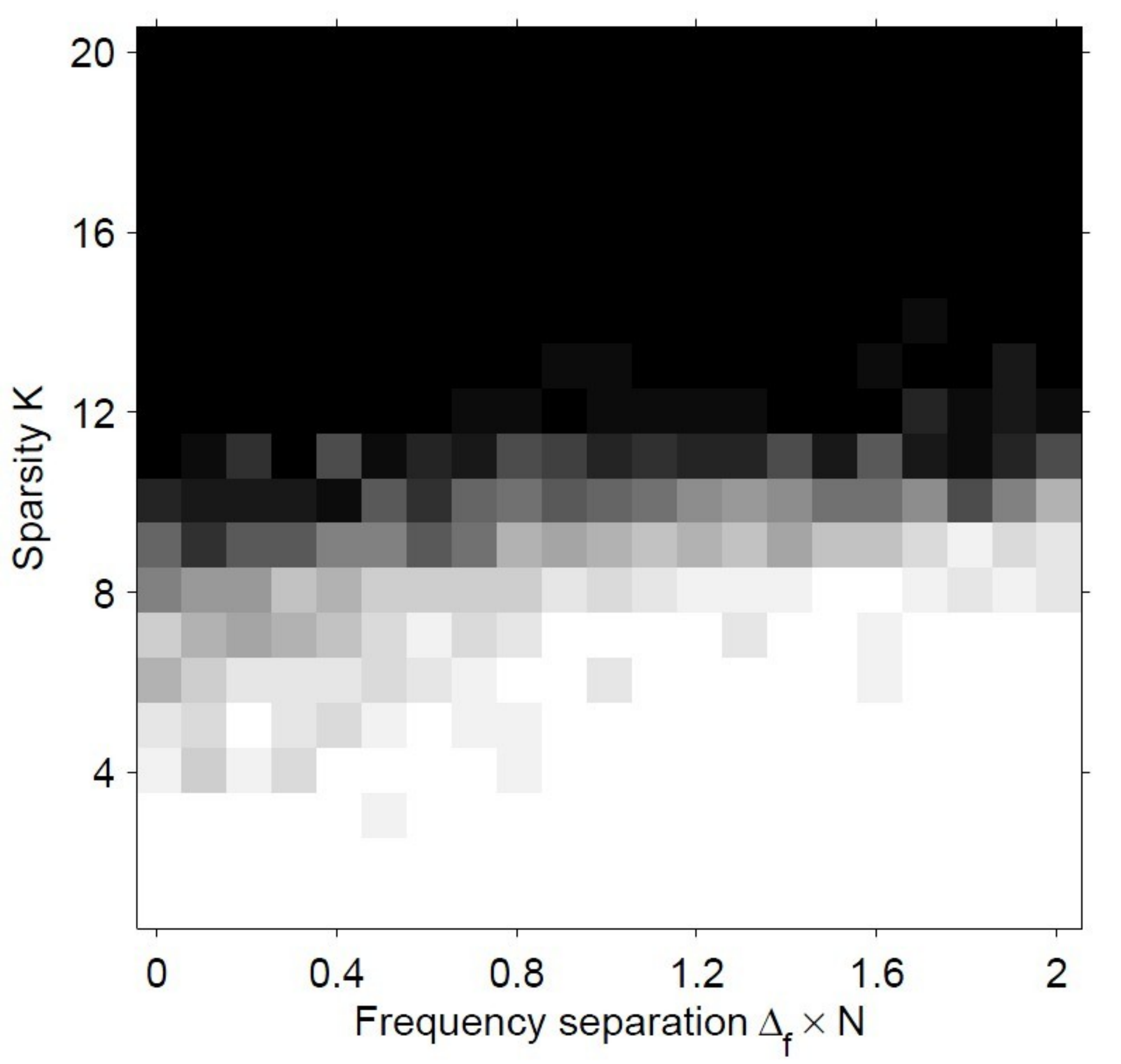} %
\includegraphics[width=1.63in]{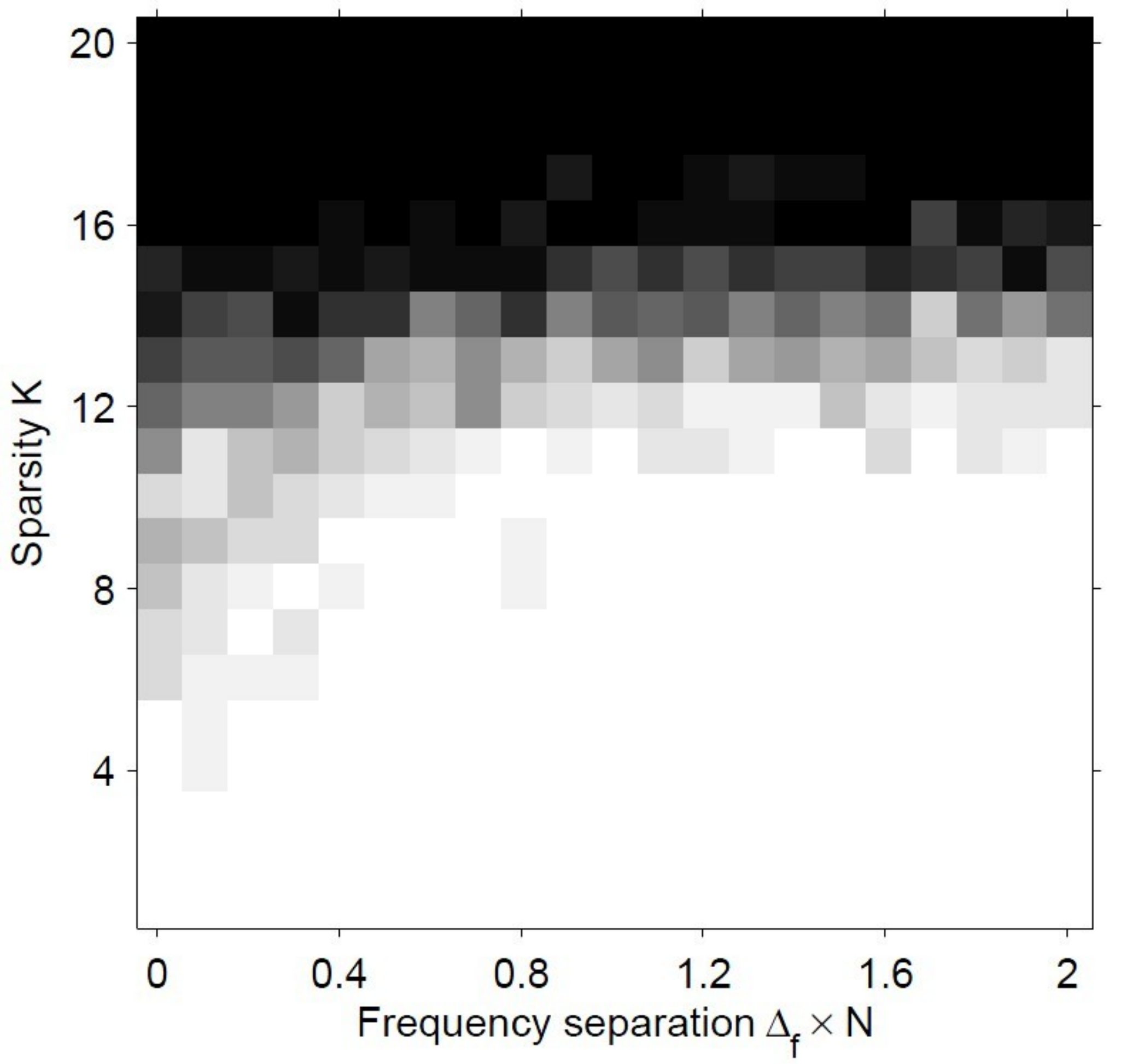}
\caption{Sparsity-separation phase transition of ANM (left) and RAM (right) with $L=1$, $N=64$ and $M=30$. The grayscale images present the success rates, where white and black indicate complete success and complete failure, respectively.} \label{Fig:phasetrans_1}
\end{figure}

We also consider the MMV case with $L=5$. The success rates of ANM and RAM are presented in Fig. \ref{Fig:phasetrans_5}. Again, remarkable improvement is obtained by the proposed RAM compared to ANM. In fact, we did not find a single failure in our simulation whenever $K\leq20$ and $\Delta_f\geq\frac{0.3}{N}$. By comparing the results in Figs. \ref{Fig:phasetrans_1} and \ref{Fig:phasetrans_5}, it can be observed that improved signal and frequency recovery performance can be obtained by increasing $L$, as reported in \cite{yang2014continuous,yang2014exact}.

\begin{figure}
\centering
\includegraphics[width=1.65in]{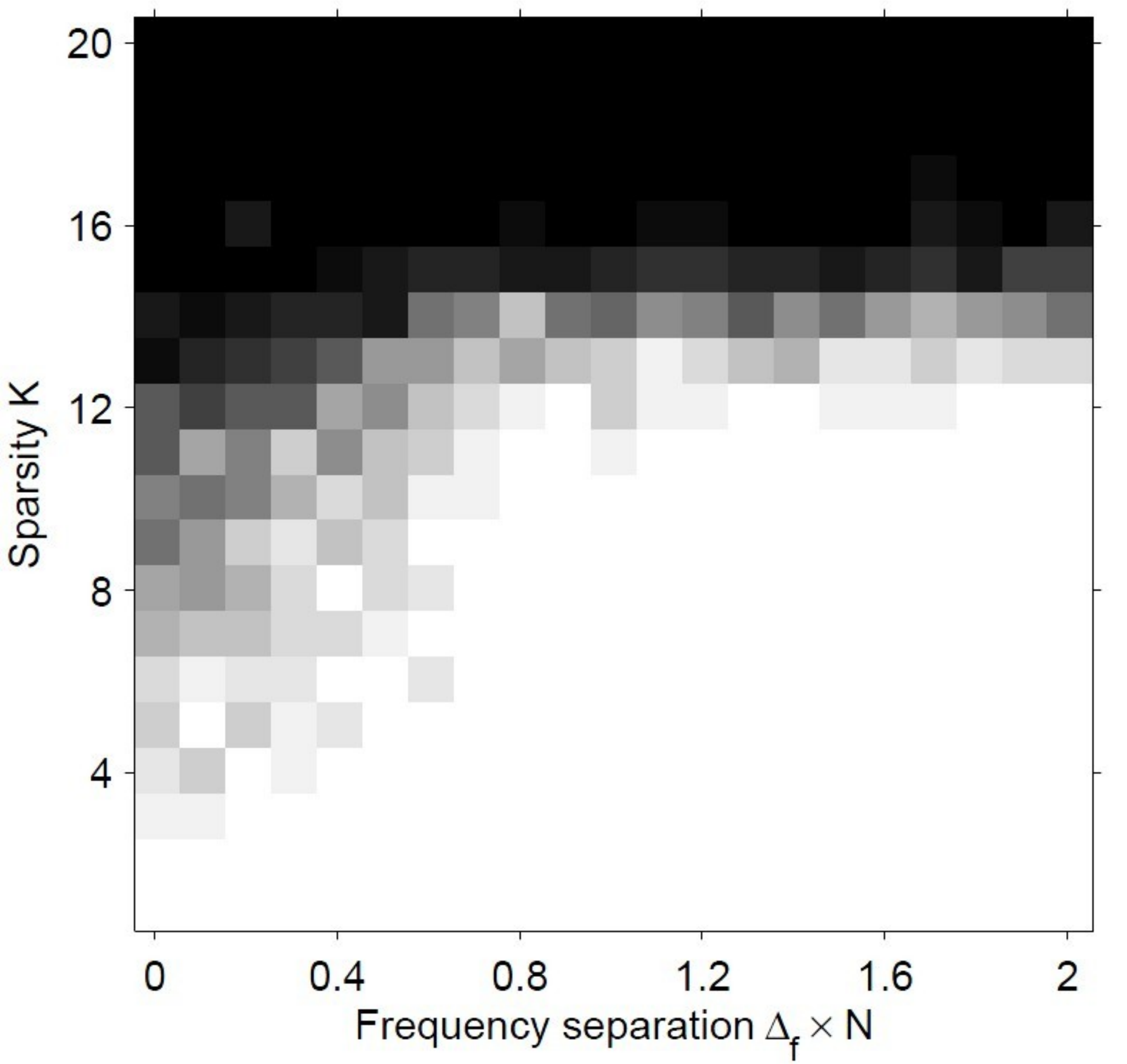} %
\includegraphics[width=1.63in]{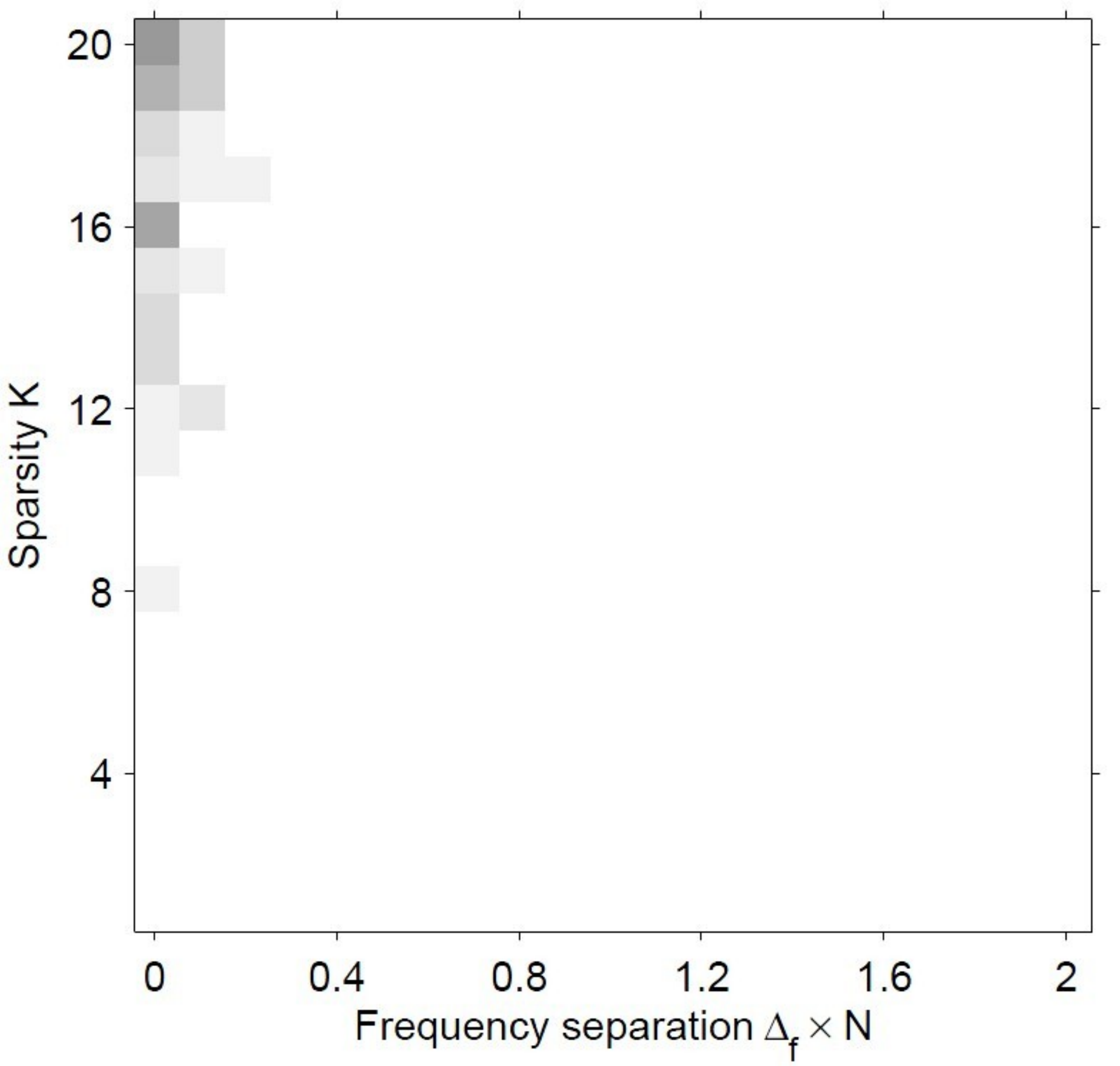}
\centering
\caption{Sparsity-separation phase transition of ANM (left) and RAM (right) with $L=5$, $N=64$ and $M=30$. The grayscale images present the success rates, where white and black indicate complete success and complete failure, respectively.} \label{Fig:phasetrans_5}
\end{figure}

\subsection{Application to DOA Estimation} \label{sec:simulation_DOA}

We apply the proposed RAM method to DOA estimation. In particular, we consider a 10-element SLA that is obtained from a virtual 20-element ULA, in which adjacent antennas are spaced by half a wavelength, by retaining the antennas indexed by $\m{\Omega}=\lbra{1,2,5,6,8,12,15,17,19,20}$. Hence, we have that $N=20$ and $M=10$. Consider that $K=4$ narrowband sources impinge on the array from directions corresponding to frequencies $0.1$, $0.11$, $0.2$ and $0.5$, and powers $10$, $10$, $3$ and $1$, respectively. Therefore, it is challenging to separate the first two sources which are separated by only $\frac{0.2}{N}$. We consider both cases of uncorrelated and correlated sources. In the latter case, sources 1 and 3 are set to be coherent (completely correlated). Assume that $L=200$ data snapshots are collected which are corrupted by i.i.d. Gaussian noise of unit variance. In our simulation, ANM and RAM are implemented using both SDPT3 and ADMM and based on the proposed dimension reduction technique. A nontrivial weighting function is implemented in the first iteration of RAM with $\m{W}=\m{\Gamma}_{\m{\Omega}}^T \sbra{\frac{1}{L}\m{Y}_{\m{\Omega}}^o\m{Y}_{\m{\Omega}}^{oH}+\epsilon\m{I}}^{-1} \m{\Gamma}_{\m{\Omega}}$, where $\m{\Gamma}_{\m{\Omega}}\in\lbra{0,1}^{M\times N}$ has 1 in the $j$th
row only at the $\Omega_j$th position. The weighting function corresponds to Capon's beamforming with $\frac{1}{L}\m{Y}_{\m{\Omega}}^o\m{Y}_{\m{\Omega}}^{oH}$ being the sample covariance and $\epsilon$ a regularization parameter. We terminate RAM within maximally 10 iterations. We consider MUSIC and ANM for comparison. Assume that the noise variance $\sigma^2=1$ is given for ANM and RAM and the source number $K$ is provided for MUSIC. We set $\eta^2=\sbra{ML+2\sqrt{ML}}\sigma^2$ (mean + twice standard deviation) to upper bound the noise energy with high probability in ANM and RAM.

\begin{figure}
\centering
\includegraphics[width=1.65in]{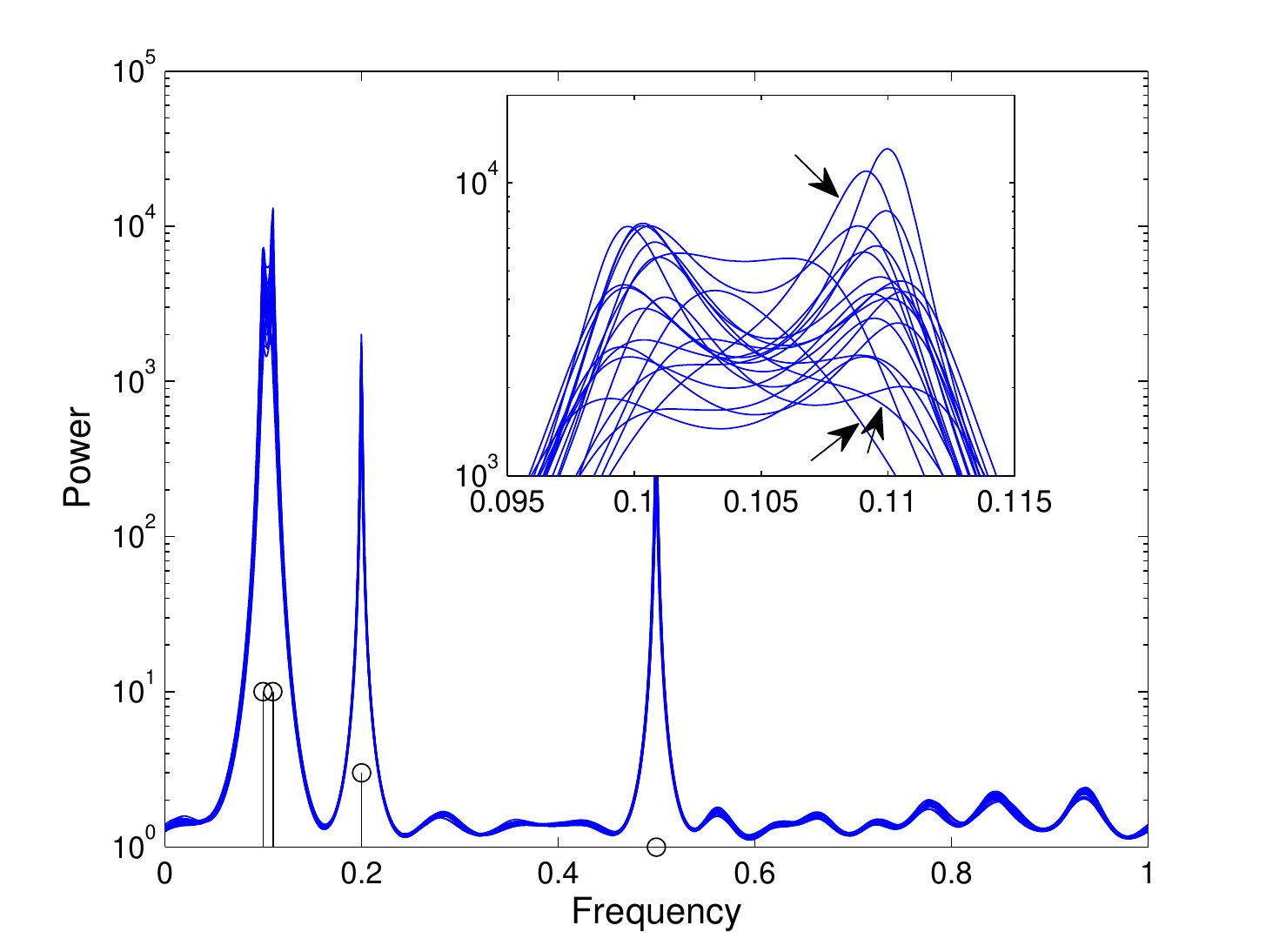} %
\includegraphics[width=1.63in]{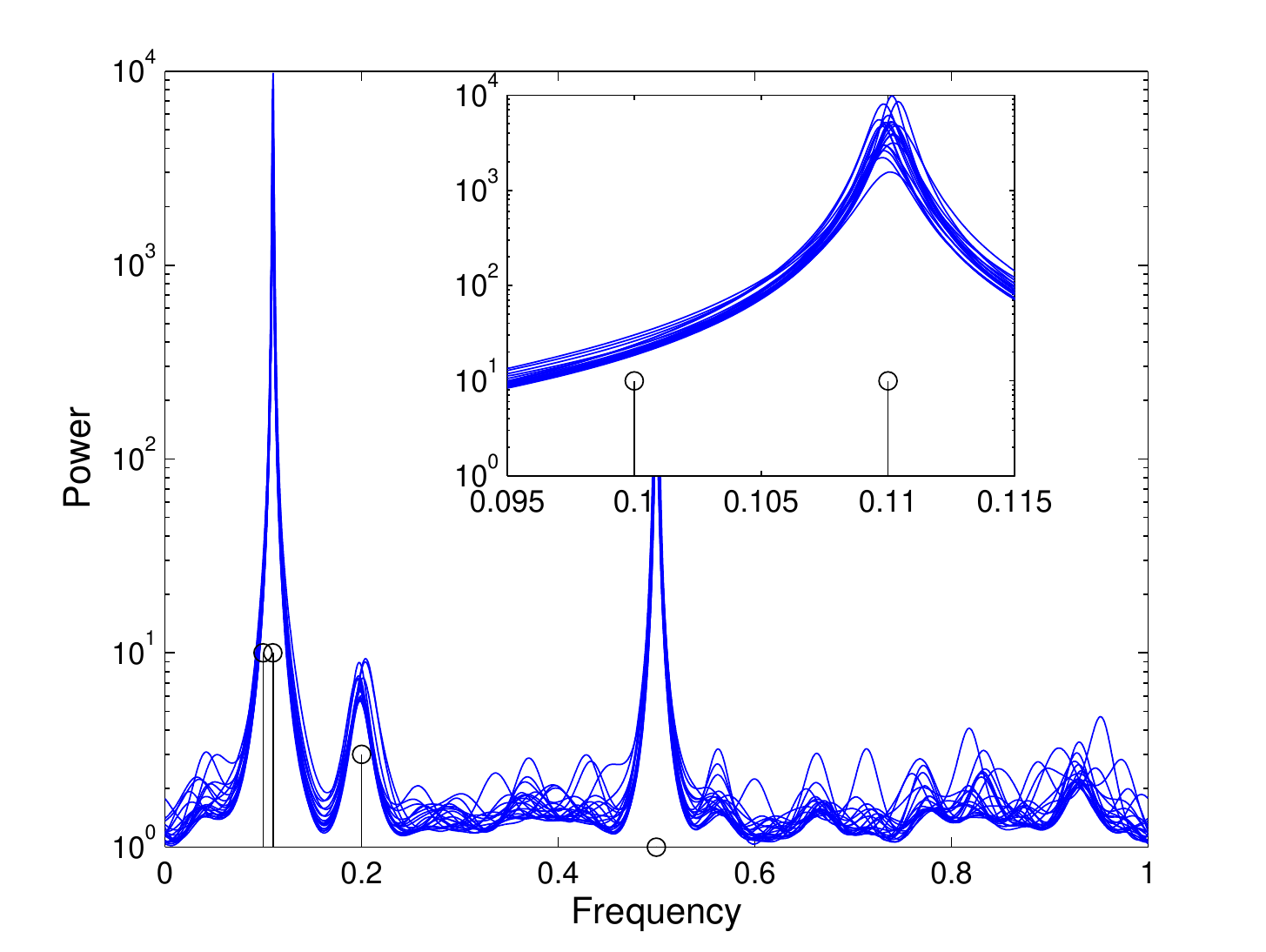}
\includegraphics[width=1.65in]{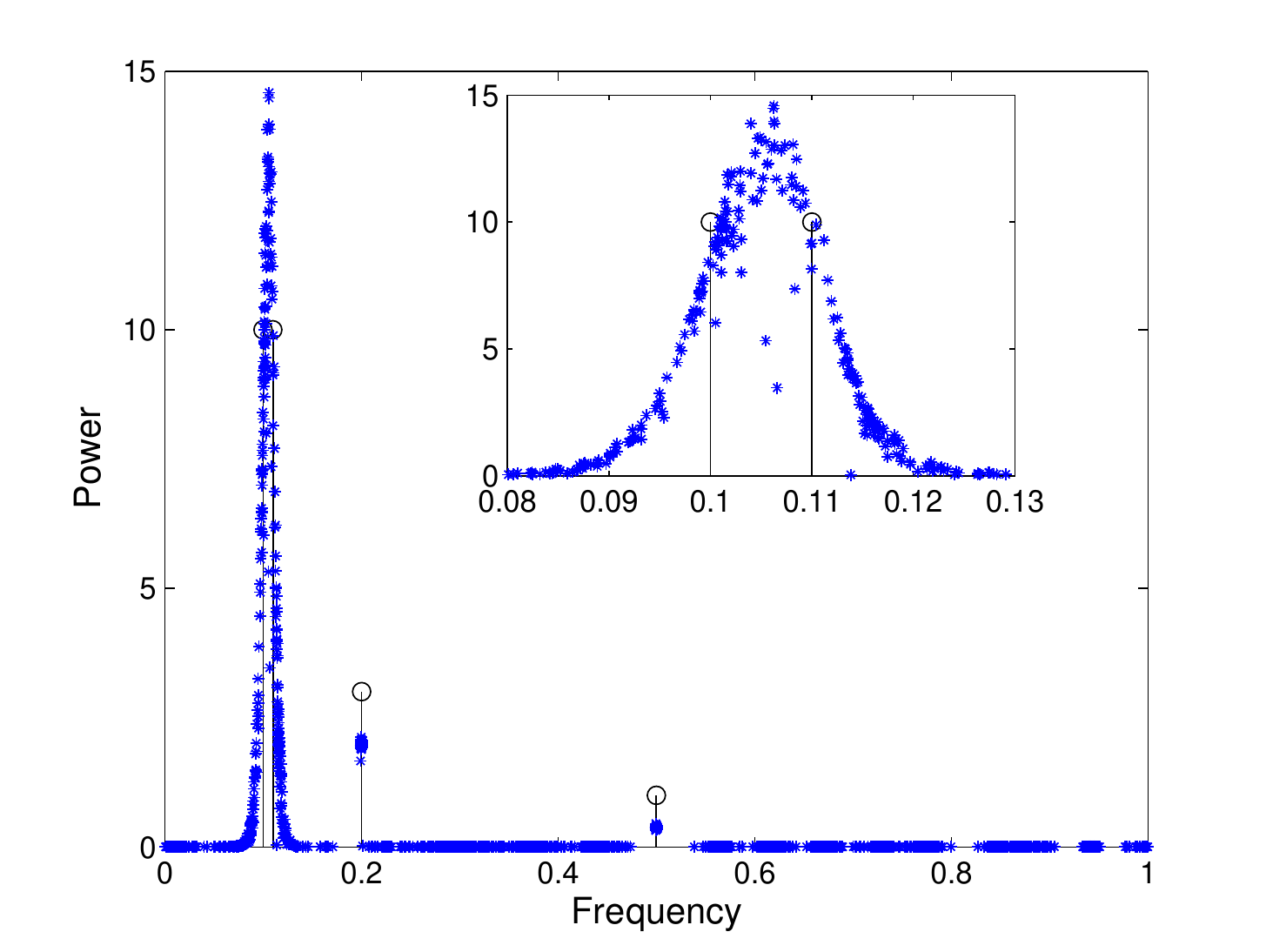} %
\includegraphics[width=1.63in]{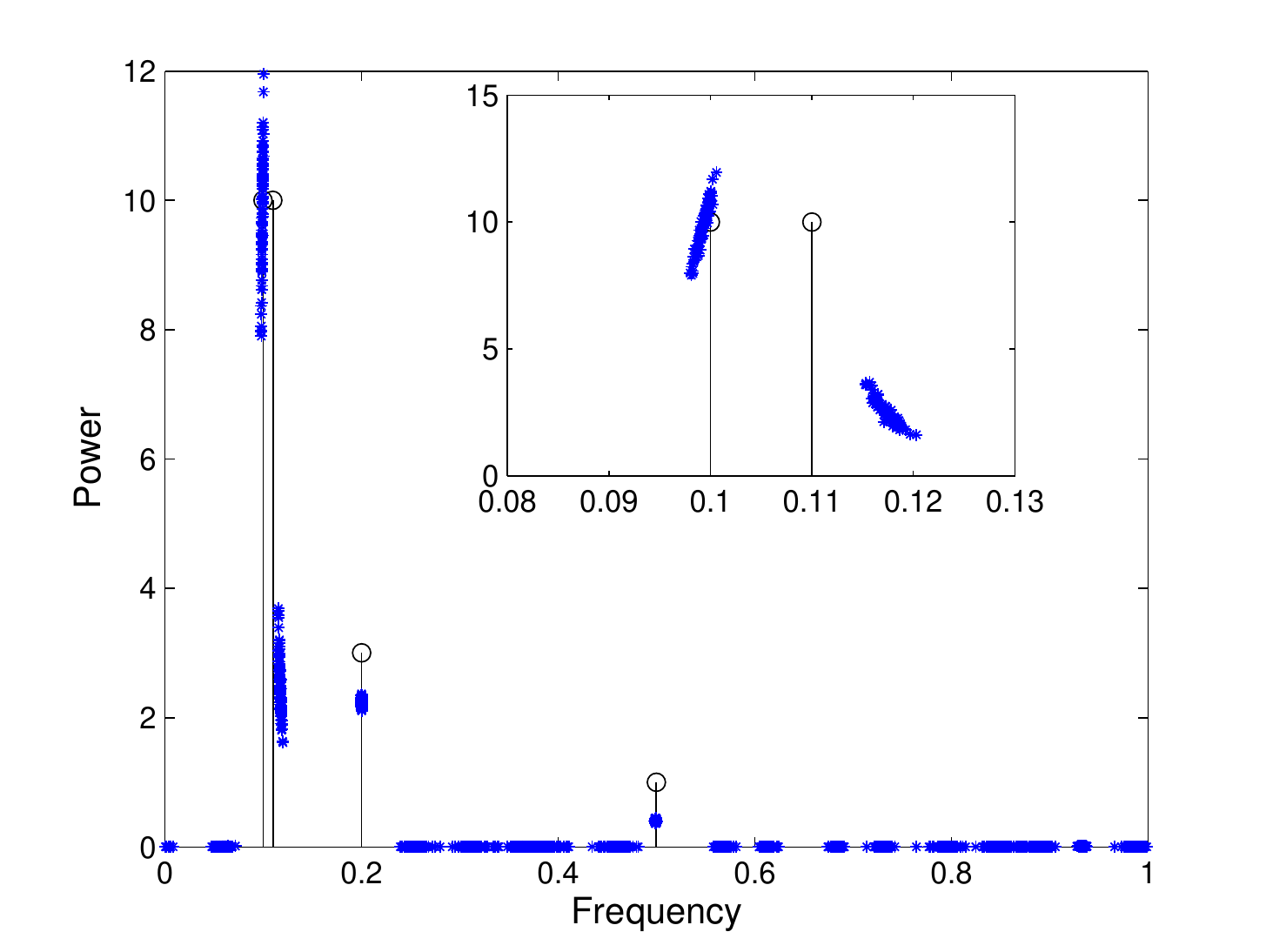}
\includegraphics[width=1.65in]{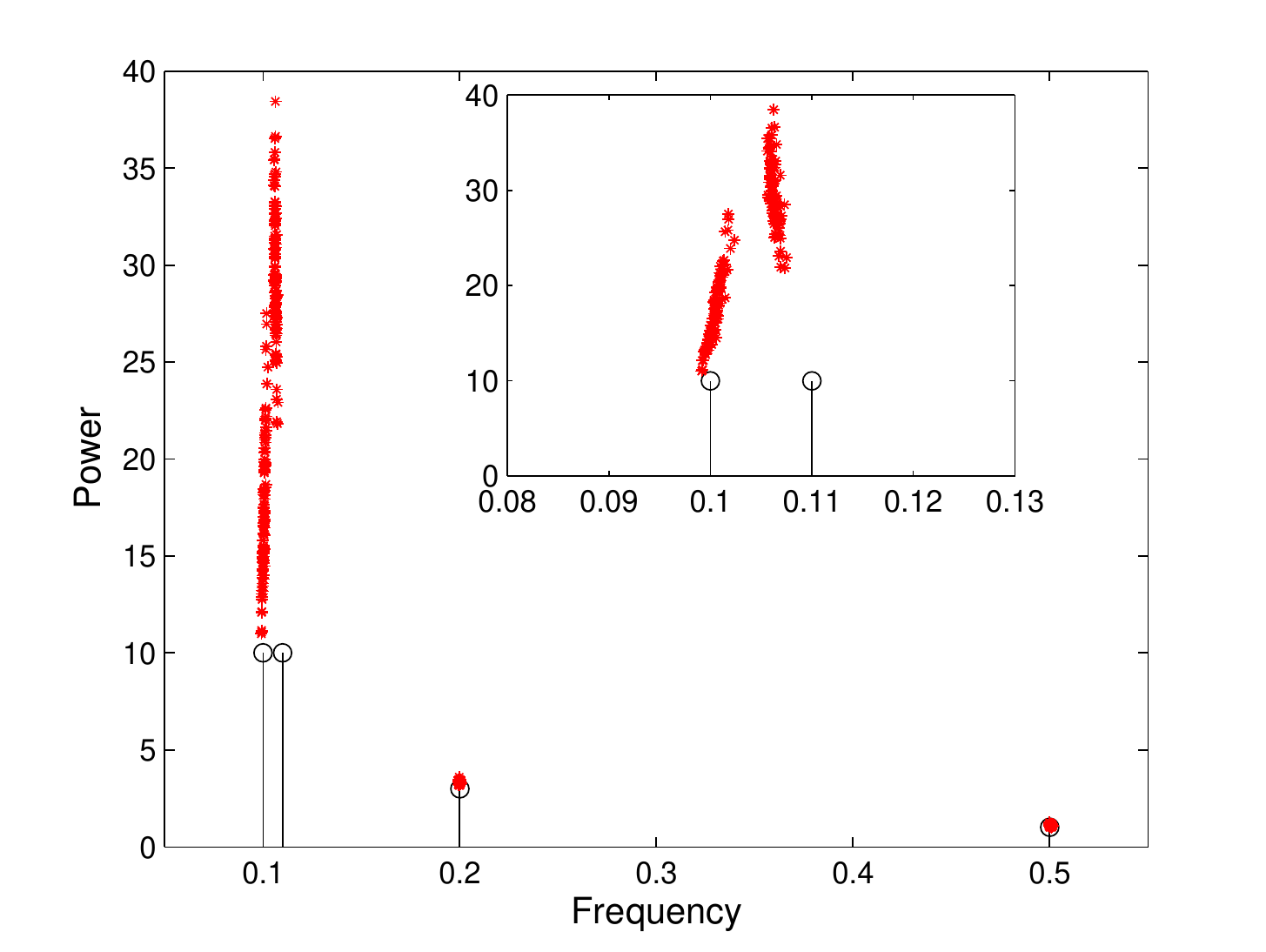} %
\includegraphics[width=1.63in]{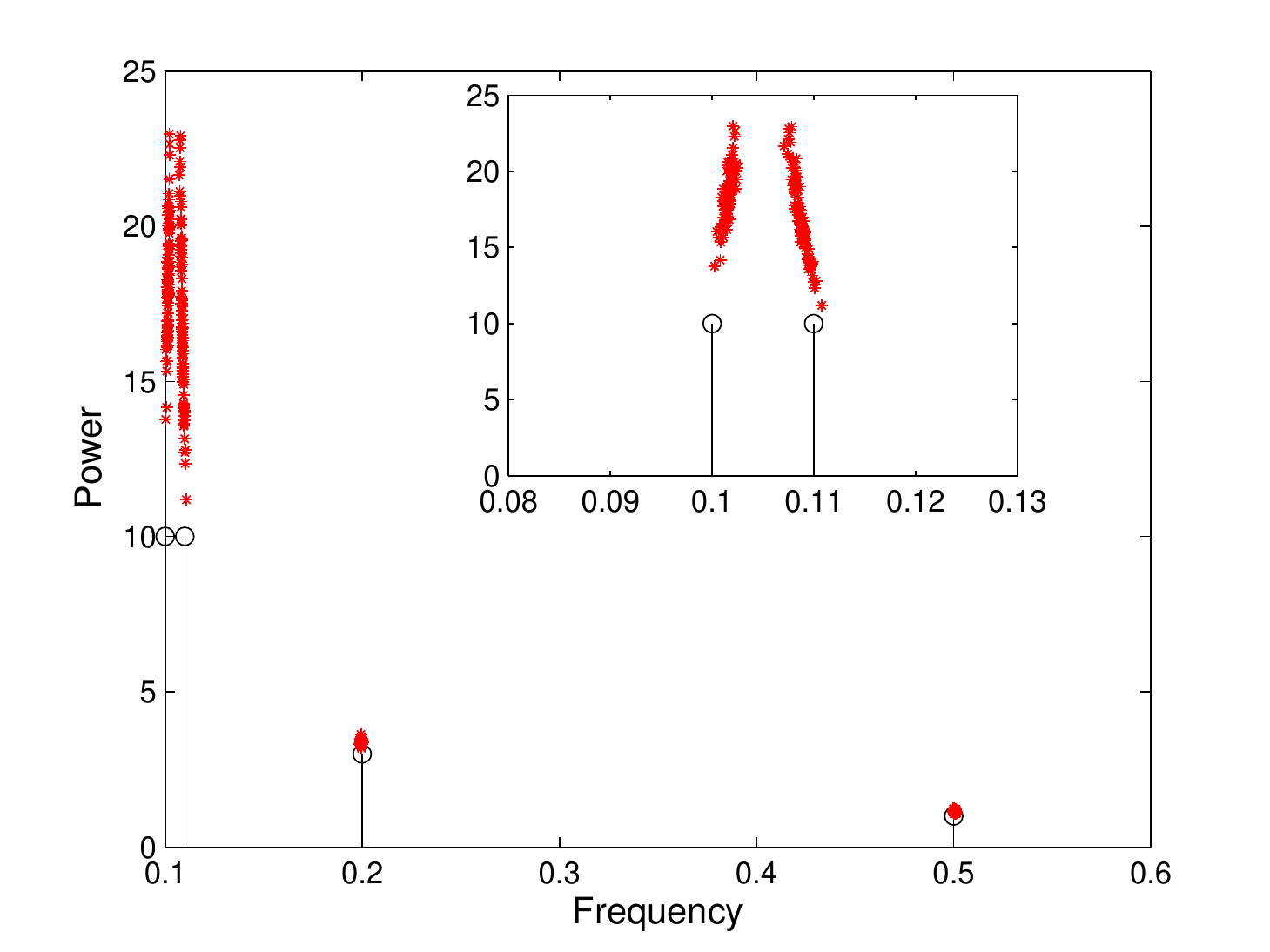}
\centering
\caption{Frequency spectra of MUSIC (top), ANM (middle) and RAM (bottom) with uncorrelated (left) and correlated (right) sources in 100 Monte Carlo runs. Sources 1 and 3 are coherent in the case of correlated sources. The area around the first two sources are zoomed in in each subfigure. Only results of the first 20 runs are presented for MUSIC for clearer illustration.} \label{Fig:noisy_noiselevel}
\end{figure}

Our simulation results of 100 Monte Carlo runs are presented in Fig. \ref{Fig:noisy_noiselevel} (only the first 20 runs are presented for MUSIC for better illustration). In the absence of source correlations, MUSIC has satisfactory performance in most scenarios. However, its power spectrum exhibits only a single peak around the first two sources (i.e., the two sources cannot be separated) in at least 3 out of the first 20 runs (indicated by the arrows). Moreover, MUSIC is sensitive to source correlations and cannot detect source 1 when it is coherent with source 3. ANM cannot separate the first two sources in the uncorrelated source case and always produces many spurious sources. In contrast, the proposed RAM always correctly detects 4 sources near the true locations, demonstrating its capabilities in enhancing sparsity and resolution. It is interesting to note that the first two sources tend to merge together in RAM. This is reasonable since in the case of heavy noise it is even possible that we can only detect a single source around the first two and 3 sources in total. Note also that the results of ANM and RAM presented in Fig. \ref{Fig:noisy_noiselevel} are produced by the ADMM implementations, while those by SDPT3 are very similar and omitted. In computational time, the SDPT3 versions of ANM and RAM take $0.87$s and $7.31$s on average, respectively, while these numbers are decreased to $0.20$s and $2.65$s for the ADMM ones.



\section{Conclusion} \label{sec:conclusion}
In this paper, we studied the signal and frequency recovery problem in CCS. Motivated by its connection to the topic of LRMR, we proposed reweighted atomic-norm minimization (RAM) for enhancing sparsity and resolution compared to currently prominent atomic norm minimization (ANM) and validated its advantageous performance via numerical simulations. As a byproduct, we have established a framework for applying LRMR techniques to CCS. In future studies, we may try other methods for matrix rank minimization in the literature and propose more computationally efficient algorithms for CCS. While LRMR represents a 2D counterpart of sparse recovery in discrete CS, this work sheds light on connections between discrete CS, continuous CS and LRMR.

\appendix

\subsection{Proof of Theorem \ref{thm:epsilontoinf}} \label{sec:append:epsilontoinf}
Note that
\equ{\begin{split}
&\cM^{\epsilon}\sbra{\m{Y}}-N\ln\epsilon \\
&=\min_{\m{u}} \ln\abs{\epsilon^{-1}T\sbra{\m{u}} +\m{I} } + \tr\sbra{\m{Y}^HT\sbra{\m{u}}^{-1} \m{Y}},\\
&\phantom{=}\st T\sbra{\m{u}}\geq\m{0}. \end{split}\label{formu:resid}}
Let
\equ{\begin{split}\m{u}^*=
&\arg\min_{\m{u}} \tr\sbra{T\sbra{\m{u}}} + \tr\sbra{\m{Y}^HT\sbra{\m{u}}^{-1} \m{Y}},\\
&\st T\sbra{\m{u}}\geq\m{0}.\end{split}}
Then, according to (\ref{formu:AN_SDP}) we have that
\equ{\tr\sbra{T\sbra{\m{u}^*}} + \tr\sbra{\m{Y}^HT\sbra{\m{u}^*}^{-1} \m{Y}}=2\sqrt{N}\atomn{\m{Y}}.}
Consider the value of the objective function in (\ref{formu:resid}) at $\m{u}=\epsilon^{\frac{1}{2}}\m{u}^*$. It holds that
\equ{\begin{split}
&\cM^{\epsilon}\sbra{\m{Y}}-N\ln\epsilon\\
&\leq \ln\abs{\epsilon^{-\frac{1}{2}}T\sbra{\m{u}^*} +\m{I} } + \tr\sbra{\m{Y}^HT\sbra{\m{u}^*}^{-1} \m{Y}}\epsilon^{-\frac{1}{2}} \\
&= \tr\sbra{T\sbra{\m{u}^*}} \epsilon^{-\frac{1}{2}} + o\sbra{\epsilon^{-\frac{1}{2}}} + \tr\sbra{\m{Y}^HT\sbra{\m{u}^*}^{-1} \m{Y}}\epsilon^{-\frac{1}{2}}\\
&= 2\sqrt{N}\atomn{\m{Y}}\epsilon^{-\frac{1}{2}} + o\sbra{\epsilon^{-\frac{1}{2}}}.
\end{split} \label{formu:residleq}}
On the other hand, we denote by $\m{u}_{\epsilon}^*$ the optimizer to the optimization problem in (\ref{formu:resid}). We first argue that $\tr\sbra{T\sbra{\m{u}_{\epsilon}^*}}\epsilon^{-1}=o\sbra{1}$. Otherwise, by (\ref{formu:resid}) $\cM^{\epsilon}\sbra{\m{Y}}-N\ln\epsilon\geq \ln\abs{\epsilon^{-1}T\sbra{\m{u}_{\epsilon}^*} +\m{I} }$ is not an infinitesimal, contradicting (\ref{formu:residleq}). Therefore,
\equ{\begin{split}
&\cM^{\epsilon}\sbra{\m{Y}}-N\ln\epsilon\\
&= \ln\abs{\epsilon^{-1}T\sbra{\m{u}_{\epsilon}^*} +\m{I} } + \tr\sbra{\m{Y}^HT\sbra{\m{u}_{\epsilon}^*}^{-1} \m{Y}}\\
&= \tr\sbra{T\sbra{\m{u}_{\epsilon}^*}} \epsilon^{-1} + o\sbra{\tr\sbra{T\sbra{\m{u}_{\epsilon}^*}}\epsilon^{-1}}\\
&\quad+ \tr\sbra{\m{Y}^HT\sbra{\m{u}_{\epsilon}^*}^{-1} \m{Y}}\\
&= \mbra{\tr\sbra{T\sbra{\m{u}_{\epsilon}^*}} + \tr\sbra{\epsilon^{\frac{1}{2}}\m{Y}^HT\sbra{\m{u}_{\epsilon}^*}^{-1} \m{Y}\epsilon^{\frac{1}{2}}}}\epsilon^{-1}\\
&\quad+ o\sbra{\tr\sbra{T\sbra{\m{u}_{\epsilon}^*}}\epsilon^{-1}}\\
&\geq 2\sqrt{N}\atomn{\epsilon^{\frac{1}{2}}\m{Y}}\epsilon^{-1} + o\sbra{\tr\sbra{T\sbra{\m{u}_{\epsilon}^*}}\epsilon^{-1}}\\
&= 2\sqrt{N}\atomn{\m{Y}}\epsilon^{-\frac{1}{2}} + o\sbra{\tr\sbra{T\sbra{\m{u}_{\epsilon}^*}}\epsilon^{-1}}. \end{split}\label{formu:residgeq}}
Combining (\ref{formu:residleq}) and the second equality in (\ref{formu:residgeq}) yields that $\tr\sbra{T\sbra{\m{u}_{\epsilon}^*}}\epsilon^{-1}=O\sbra{\epsilon^{-\frac{1}{2}}}$. Then, the last equality in (\ref{formu:residgeq}) gives that
\equ{\cM^{\epsilon}\sbra{\m{Y}}-N\ln\epsilon\geq 2\sqrt{N}\atomn{\m{Y}}\epsilon^{-\frac{1}{2}} + o\sbra{\epsilon^{-\frac{1}{2}}}. \label{formu:residgeq2}}
The conclusion is finally drawn by combining (\ref{formu:residleq}) and (\ref{formu:residgeq2}).

\subsection{Proof of Theorem \ref{thm:epsilontozero}} \label{sec:append:epsilontozero}

Our proof is given in four steps. Let $\lbra{\lambda_{\epsilon,k}}_{k=1}^N$ be the eigenvalues of $T\sbra{\m{u}_{\epsilon}^*}$ that are sorted descendingly. In \emph{Step 1}, we attempt to show that there exists a constant $c>0$ such that $\lambda_{\epsilon,r}\geq c$ holds uniformly for $\epsilon\in(0,1]$. Let $T\sbra{\m{u}_{\epsilon}^*}=\sum_{k=1}^N\lambda_{\epsilon,k}\m{q}_{\epsilon,k}\m{q}_{\epsilon,k}^H= \m{Q}\diag\sbra{\lambda_{\epsilon,1},\dots,\lambda_{\epsilon,N}}\m{Q}^H$ be the eigen-decomposition, where $\m{q}_{\epsilon,k}$ is the $k$th column of $\m{Q}$ and $\m{Q}\m{Q}^H=\m{I}$. Then,
\equ{\cM^{\epsilon}\sbra{\m{Y}}=\sum_{k=1}^N\ln\abs{\lambda_{\epsilon,k}+\epsilon} + \sum_{k=1}^N \frac{\overline{p}_{\epsilon,k}}{\lambda_{\epsilon,k}}, \label{formu:equ1}}
where $\overline p_{\epsilon,k} \triangleq\twon{\m{q}_{\epsilon,k}^H\m{Y}}^2$. According to the optimality of $\m{u}_{\epsilon}^*$, the right hand side of the equation above obtains its minimum at $\lambda_{\epsilon,k}$. Since its derivative at $\lambda_{\epsilon,k}$ equals $\frac{1}{\lambda_{\epsilon,k}+\epsilon} -\frac{\overline{p}_{\epsilon,k}}{\lambda_{\epsilon,k}^2}$, we have that
\equ{\overline{p}_{\epsilon,k}=\left\{\begin{array}{ll} 0, & \text{ if } \lambda_{\epsilon,k}=0, \\ \frac{\lambda_{\epsilon,k}^2}{\lambda_{\epsilon,k}+\epsilon} \in \sbra{\lambda_{\epsilon,k}-\epsilon,\lambda_{\epsilon,k}}, & \text{ otherwise.} \end{array}\right. \label{formu:lambdapbar}}
Therefore,
\equ{\tr\sbra{T\sbra{\m{u}_{\epsilon}^*}} = \sum_{k=1}^N\lambda_{\epsilon,k} < \sum_{k=1}^N\overline{p}_{\epsilon,k} + N\epsilon \leq\frobn{\m{Y}}^2 + N }
provided that $\epsilon\leq 1$. It follows that $\m{u}_{\epsilon}^*$ and $\lbra{\lambda_{\epsilon,k}}$ are bounded.

On the other hand, let $T\sbra{\m{u}_{\epsilon}^*}=\sum_{k=1}^{N} p_{\epsilon,k}\m{a}\sbra{f_{\epsilon,k}}\m{a}\sbra{f_{\epsilon,k}}^H=\m{A}\m{P}\m{A}^H$ be any Vandermonde decomposition, where $\lbra{p_{\epsilon,k}}_{k=1}^N$ are sorted descendingly (note that, if $r_{\epsilon}^*=\rank\sbra{T\sbra{\m{u}_{\epsilon}^*}}<N$, then this decomposition is unique and only the first $r_{\epsilon}^*$ elements in $\lbra{p_{\epsilon,k}}_{k=1}^N$ are nonzero). Following from the fact that $\m{Y}$ lies in the range space of $\m{A}$, we have that $\m{Y}=\m{A}\m{S}$ for some $\m{S}$. Let $\m{s}_{\epsilon,k}$ be its $k$th row of $\m{S}$. Then we have
\equ{\cM^{\epsilon}\sbra{\m{Y}} = \ln\abs{\m{A}\m{P}\m{A}^H+\epsilon\m{I}} + \tr\sbra{\m{S}^H\m{P}^{-1}\m{S}}. \label{formu:equ_vanderdec}}
According to the optimality of $\m{u}_{\epsilon}^*$, the right hand side of the equation above obtains its minimum at $p_{\epsilon,k}$. As a result,
its derivative at $p_{\epsilon,k}$ equals 0, i.e.,
\equ{\m{a}\sbra{f_{\epsilon,k}}^H\sbra{\m{A}\m{P}\m{A}^H+\epsilon\m{I}}^{-1}\m{a}\sbra{f_{\epsilon,k}} - \frac{\twon{\m{s}_{\epsilon,k}}^2}{p_{\epsilon,k}^2}=0,}
and so $p_{\epsilon,k}>\twon{\m{s}_{\epsilon,k}}^2$ since
\equ{\m{a}\sbra{f_{\epsilon,k}}^H\sbra{\m{A}\m{P}\m{A}^H+\epsilon\m{I}}^{-1}\m{a}\sbra{f_{\epsilon,k}}< p_{\epsilon,k}^{-1}}
provided that $\epsilon>0$. Since $\tr\sbra{T\sbra{\m{u}_{\epsilon}^*}}=N\sum_{k=1}^{N}p_{\epsilon,k}$ and that $\m{u}_{\epsilon}^*$ is bounded as shown previously, $\lbra{p_{\epsilon,k}}$ and $\lbra{\m{s}_{\epsilon,k}}$ are bounded.

We now prove that $\lambda_{\epsilon,r}\geq c$ for some constant $c$. Otherwise, for any $c_j=\frac{1}{j}$, $j=1,2,\dots$, there exists $\epsilon_j\in(0,1]$ such that $\lambda_{\epsilon_j,r}< c_j=\frac{1}{j}$. Since the sequence $\lbra{\sbra{\m{u}_{\epsilon_j}^*, \lambda_{\epsilon_j,k}, \m{q}_{\epsilon_j,k}, p_{\epsilon_j,k}, f_{\epsilon_j,k}, \m{s}_{\epsilon_j,k}}}_{j=1}^{\infty}$ is bounded, there must exist a convergent subsequence. Without loss of generality, we assume that the sequence is convergent itself and denote by $\sbra{\m{u}^*, \lambda_{k}, \m{q}_{k}, p_{k}, f_{k}, \m{s}_{k}}$ the limit point. Since $\lambda_r=\lim_{j\rightarrow\infty}\lambda_{\epsilon_j,r}=0$, we have $\lambda_{k}=0$ for all $k=r+1,\dots,N$. It follows that
\equ{T\sbra{\m{u}^*} = \sum_{k=1}^{r-1} \lambda_k\m{q}_k\m{q}_k^H }
and $\rank\sbra{T\sbra{\m{u}^*}}\leq r-1$. As a result, at most $r-1$ $f_k$'s are retained in the decomposition $T\sbra{\m{u}^*}=\sum_{k=1}^{N} p_{k}\m{a}\sbra{f_{k}}\m{a}\sbra{f_{k}}^H$ if we remove repetitive $f_k$'s and those with $p_k=0$. Note that $\m{s}_k=\m{0}$ if $p_k=0$ since we have shown that $p_{\epsilon,k}\geq \twon{\m{s}_{\epsilon,k}}^2$. Then, by a similar operation we can reduce the order of the decomposition $\m{Y}=\sum_{k=1}^N \m{a}\sbra{f_k}\m{s}_k$ to maximally $r-1$, i.e., we obtain an atomic decomposition of order at most $r-1$, which contradicts the fact that $\norm{\m{Y}}_{\cA,0}=r$ and leads to the conclusion.

In \emph{Step 2} we prove the first part of the theorem. According to (\ref{formu:equ1}) and the bound $\lambda_{\epsilon,r}\geq c$ shown in \emph{Step 1}, we have that
\equ{\begin{split}\cM^{\epsilon}\sbra{\m{Y}}
&=\sum_{k=1}^N\ln\abs{\lambda_{\epsilon,k}+\epsilon} + \sum_{k=1}^N \frac{\overline{p}_{\epsilon,k}}{\lambda_{\epsilon,k}}\\
&\geq \sum_{k=1}^N\ln\abs{\lambda_{\epsilon,k}+\epsilon}\\
&\geq (N-r)\ln\epsilon +r\ln c. \end{split} \label{formu:Mgeqside}}
On the other hand, we consider an atomic decomposition of $\m{Y}$ of order $r$, $\m{Y}=\sum_{k=1}^r\m{a}\sbra{f_k}\m{s}_k$. Let $T\sbra{\m{u}}=\sum_{k=1}^r\twon{\m{s}_k}^2\m{a}\sbra{f_k}\m{a}\sbra{f_k}^H$, and $\lbra{\lambda_k}_{k=1}^r$ be the $r$ nonzero eigenvalues of $T\sbra{\m{u}}$. Note that $\lbra{\lambda_k}_{k=1}^r$ are constants independent of $\epsilon$. Then, provided $\epsilon\leq 1$ we have that
\equ{\begin{split}\cM^{\epsilon}\sbra{\m{Y}}
&\leq (N-r)\ln\epsilon+ \sum_{k=1}^r\ln\abs{\lambda_{k}+\epsilon} + r\\
&\leq (N-r)\ln\epsilon + \sum_{k=1}^r\ln\abs{\lambda_{k}+1} + r.\end{split} \label{formu:Mleqside}}
Combining (\ref{formu:Mgeqside}) and (\ref{formu:Mleqside}), it yields that $\cM^{\epsilon}\sbra{\m{Y}}\sim (N-r)\ln\epsilon$ as $\epsilon\rightarrow0$.

In \emph{Step 3} we prove the second part of the theorem. Based on (\ref{formu:Mgeqside}) and (\ref{formu:Mleqside}) we have that
\equ{\begin{split}
&(N-r)\ln\epsilon + c_1\\
&\geq \cM^{\epsilon}\sbra{\m{Y}} \\
&\geq \sum_{k=r+1}^N\ln\abs{\lambda_{\epsilon,k}+\epsilon} + \sum_{k=1}^r\ln\abs{\lambda_{\epsilon,k}+\epsilon}\\
&\geq (N-r)\ln\epsilon + \sum_{k=r+1}^N\ln\abs{\frac{\lambda_{\epsilon,k}}{\epsilon}+1} + c_2, \end{split}}
where $c_1$ and $c_2$ are constants independent of $\epsilon$. Therefore, it must hold that
\equ{\ln\abs{\frac{\lambda_{\epsilon,r+1}}{\epsilon}+1}\leq \sum_{k=r+1}^N\ln\abs{\frac{\lambda_{\epsilon,k}}{\epsilon}+1}\leq c_1-c_2.}
It follows that
\equ{0\leq\lambda_{\epsilon,N}\leq\dots\leq \lambda_{\epsilon,r+1}\leq \sbra{e^{c_1-c_2}-1}\epsilon,}
i.e., $\lambda_{\epsilon,k}=O\sbra{\epsilon}$, $k=r+1,\dots,N$.

Finally, we show the last part of the theorem. For any cluster point $\m{u}_0^*$ of $\m{u}_{\epsilon}^*$ at $\epsilon=0$, there exists a sequence $\lbra{\m{u}_{\epsilon_j}^*}_{j=1}^{\infty}$ converging to $\m{u}_0^*$, where $\epsilon_j\rightarrow0$ as $j\rightarrow\infty$. It must hold that $\rank\sbra{T\sbra{\m{u}_0^*}}=r$ since the smallest $N-r$ eigenvalues of $T\sbra{\m{u}_{\epsilon_j}^*}$ approach 0. Moreover, the eigen-decomposition of $T\sbra{\m{u}_{\epsilon_j}^*}$ converge to that of $T\sbra{\m{u}_0^*}$, where again we denote their eigenvalues by $\lbra{\lambda_{\epsilon_j,k}}$ and $\lbra{\lambda_{k}}$ respectively and use the other notations similarly. Then, according to (\ref{formu:lambdapbar}) we have that $\overline{p}_{\epsilon_j,k}=\frac{\lambda_{\epsilon_j,k}^2}{\lambda_{\epsilon_j,k}+\epsilon}\rightarrow \lambda_k=\overline{p}_{k}$, as $j\rightarrow\infty$. Therefore,
\equ{\tr\sbra{\m{Y}^HT\sbra{\m{u}_{0}^*}^{-1}\m{Y}} = \sum_{k=1}^r \frac{\overline{p}_k}{\lambda_k} =r. }
On the other hand, we similarly write the Vandermonde decomposition of $T\sbra{\m{u}_{\epsilon}^*}$ and let $T\sbra{\m{u}_0^*}=\sum_{k=1}^r p_k\m{a}\sbra{f_k}\m{a}\sbra{f_k}^H$, with $\m{Y}=\sum_{k=1}^r\m{a}\sbra{f_k}\m{s}_k$. It is easy to show that $p_{k}\geq\twon{\m{s}_{k}}^2$ based on the inequality $p_{\epsilon,k}>\twon{\m{s}_{\epsilon,k}}^2$, though $p_{\epsilon,k}$ and $\m{s}_{\epsilon,k}$ do not necessarily converge to $p_{k}$ and $\m{s}_{k}$ (consider the case where an accumulation point of $\lbra{f_{\epsilon_j,1},\dots,f_{\epsilon_j,N}}_{j=1}^{\infty}$ contains identical elements). Then,
\equ{\tr\sbra{\m{Y}^HT\sbra{\m{u}_{0}^*}^{-1}\m{Y}} = \sum_{k=1}^{r} \frac{\twon{\m{s}_{k}}^2}{p_{k}}\leq r, \label{formu:leqK}}
where the equality holds iff $p_{k}=\twon{\m{s}_{k}}^2$. So, we complete the proof.

\subsection{Proof of Theorem \ref{thm:weightAN}} \label{sec:Append_weightAN}
The conclusion is a direct result of the following equalities:
\equ{\begin{split}
&\min_{\m{u}} \frac{\sqrt{N}}{2}\tr\sbra{\m{W}T\sbra{\m{u}}} + \frac{1}{2\sqrt{N}}\tr\sbra{\m{Y}^HT\sbra{\m{u}}^{-1}\m{Y}},\\
&\st T\sbra{\m{u}}\geq\m{0}\\
=& \min_{f_k,p_k\geq0} \frac{\sqrt{N}}{2}\tr\sbra{\m{W}\m{R}} + \frac{1}{2\sqrt{N}}\tr\sbra{\m{Y}^H\m{R}^{-1}\m{Y}},\\
&\st \m{R}=\sum_k p_k\m{a}\sbra{f_k}\m{a}\sbra{f_k}^H\\
=& \min_{f_k,p_k\geq0,\m{s}_k} \frac{\sqrt{N}}{2}\sum_k\m{a}\sbra{f_k}^H\m{W}\m{a}\sbra{f_k}p_k\\
&\phantom{\min_{f_k,p_k\geq0,\m{s}_k}} + \frac{1}{2\sqrt{N}}\sum_k \twon{\m{s}_k}^2p_k^{-1},\\
&\st \m{Y} = \sum_k\m{a}\sbra{f_k}\m{s}_k\\
=& \min_{f_k,p_k\geq0,\m{s}_k} \frac{\sqrt{N}}{2}\sum_kw\sbra{f_k}^{-2}p_k + \frac{1}{2\sqrt{N}}\sum_k \twon{\m{s}_k}^2p_k^{-1},\\
&\st \m{Y} = \sum_k\m{a}\sbra{f_k}\m{s}_k\\
=& \min_{f_k,\m{s}_k} \sum_kw\sbra{f_k}^{-1}\twon{\m{s}_k},\st \m{Y} = \sum_k\m{a}\sbra{f_k}\m{s}_k \\
=& \norm{\m{Y}}_{\cA^w},\end{split} \label{formu:weightAN_derivation}}
where the first equality applies the Vandermonde decomposition, and the second follows the equality (see \cite{yang2014exact})
\equ{\begin{split}
&\tr\sbra{\m{Y}^H\m{R}^{-1}\m{Y}}\\
&=\min_{f_k,\m{s}_k}\sum_k \twon{\m{s}_k}^2p_k^{-1},\st \m{Y} = \sum_k\m{a}\sbra{f_k}\m{s}_k\end{split}}
given the expression of $\m{R}$ in (\ref{formu:weightAN_derivation}). Note that in the first equality in \eqref{formu:weightAN_derivation} we did not specify the order of the Vandermonde decomposition of $T(\m{u})$. This means that the proof holds true for any possible decomposition (whenever $T(\m{u})$ is invertible or not).

\subsection{Lagrangian Analysis of the Dual Problem (\ref{formu:duality})} \label{sec:append_duality}
Let $\m{\Lambda}=\begin{bmatrix}\m{U}&\m{V}^H \\ \m{V} & \m{Z}\end{bmatrix}\geq\m{0}$. The Lagrangian function of (\ref{formu:weightedSDP}) is given as follows:
\equ{\begin{split}
&\cL\sbra{\m{u},\m{X},\m{Y},\m{\Lambda},\lambda} \\
&= \tr\sbra{\m{W}T\sbra{\m{u}}} +\tr\sbra{\m{X}} - \tr\sbra{\begin{bmatrix}\m{X}&\m{Y}^H \\ \m{Y} & T\sbra{\m{u}}\end{bmatrix} \m{\Lambda}} \\ &\quad+\lambda\sbra{\frobn{\m{Y}_{\m{\Omega}} - \m{Y}_{\m{\Omega}}^o}^2 -\eta^2} \\
&= \tr\mbra{\sbra{\m{W}-\m{Z}}T\sbra{\m{u}}} + \tr\mbra{\sbra{\m{I}-\m{U}}\m{X}} -2\Re\tr\sbra{\m{Y}_{\overline{\m{\Omega}}}^H\m{V}_{\overline{\m{\Omega}}}}\\
&\quad+ \lambda\frobn{\m{Y}_{\m{\Omega}} - \m{Y}_{\m{\Omega}}^o - \lambda^{-1}\m{V}_{\m{\Omega}}}^2 -\lambda^{-1}\frobn{\m{V}_{\m{\Omega}}}^2 -\lambda\eta^2\\
&\quad - 2\Re\tr\sbra{\m{Y}_{\m{\Omega}}^{oH}\m{V}_{\m{\Omega}}}.
\end{split} \notag}
Minimizing $\cL$ with respect to $\sbra{\m{u},\m{X},\m{Y}}$ gives the dual objective which equals $-\lambda^{-1}\frobn{\m{V}_{\m{\Omega}}}^2 -\lambda\eta^2 -2\Re\tr\sbra{\m{Y}_{\m{\Omega}}^{oH} \m{V}_{\m{\Omega}}}$, if
\equ{T^*\sbra{\m{W}-\m{Z}}=\m{0},\quad \m{U}=\m{I}, \text{ and } \m{V}_{\overline{\m{\Omega}}}=\m{0}, \notag}
or $-\infty$, otherwise, where $T^*\sbra{\cdot}$ denotes the adjoint operator of $T\sbra{\cdot}$. Therefore, we obtain the dual problem in (\ref{formu:duality}) by noting that \equ{\lambda^{-1}\frobn{\m{V}_{\m{\Omega}}}^2 + \lambda\eta^2 \geq 2\eta\frobn{\m{V}_{\m{\Omega}}}. \notag}

\subsection{Proof of Proposition \ref{prop:dimreduce}} \label{sec:Append_dimreduce}
Regarding (\ref{formu:problem_j}) and (\ref{formu:problem}) we consider the following optimization problem:
\equ{\begin{split}
\min_{\m{Y}} \tr\sbra{\m{Y}^H\m{C}\m{Y}}, \st \frobn{\m{Y}_{\m{\Omega}}-\m{Y}_{\m{\Omega}}^o}^2\leq \eta^2, \end{split} \label{formu:probY}}
where $\m{C}\geq\m{0}$ is fixed. We replace the optimization variable $\m{Y}$ by $\m{Z}=\m{Y}\m{Q}$. Since $\m{Q}$ is a unitary matrix, the problem becomes
\equ{\begin{split}
\min_{\m{Z}} \tr\sbra{\m{Z}^H\m{C}\m{Z}}, \st \frobn{\m{Z}_{\m{\Omega}}-\m{Y}_{\m{\Omega}}^o\m{Q}}^2\leq \eta^2, \end{split} \notag}
and equivalently,
\equ{\begin{split}
&\min_{\m{Z}_1,\m{Z}_2} \tr\sbra{\m{Z}_1^H\m{C}\m{Z}_1}+ \tr\sbra{\m{Z}_2^H\m{C}\m{Z}_2},\\
&\st \frobn{\m{Z}_{1\m{\Omega}}-\m{Y}_{\m{\Omega}}^o\m{Q}_1}^2 + \frobn{\m{Z}_{2\m{\Omega}}}^2\leq \eta^2, \end{split} \notag}
where $\m{Z}_j=\m{Y}\m{Q}_j$, $j=1,2$. Denote the optimizer by $\sbra{\m{Z}_1^*,\m{Z}_2^*}$. It is obvious that $\m{Z}_{2}^*=\m{0}$ since $\tr\sbra{\m{Z}_2^H\m{C}\m{Z}_2}\geq0$. Then the problem becomes
\equ{\begin{split}
\min_{\m{Z}_1} \tr\sbra{\m{Z}_1^H\m{C}\m{Z}_1},\st \frobn{\m{Z}_{1\m{\Omega}}-\m{Y}_{\m{\Omega}}^o\m{Q}_1}^2\leq \eta^2, \end{split} \notag}
which is a dimension reduced version of (\ref{formu:probY}) and has the same optimal function value. Moreover, given $\m{Z}^*=\begin{bmatrix}\m{Z}_1^*&\m{0}\end{bmatrix}$ we have the optimizer to (\ref{formu:probY}) $\m{Y}^*=\m{Z}^*\m{Q}^H= \m{Z}_1^* \m{Q}_1^H$. Now the conclusion can be easily drawn.

\bibliographystyle{IEEEtran}

\begin{thebibliography}{10}
\providecommand{\url}[1]{#1}
\csname url@samestyle\endcsname
\providecommand{\newblock}{\relax}
\providecommand{\bibinfo}[2]{#2}
\providecommand{\BIBentrySTDinterwordspacing}{\spaceskip=0pt\relax}
\providecommand{\BIBentryALTinterwordstretchfactor}{4}
\providecommand{\BIBentryALTinterwordspacing}{\spaceskip=\fontdimen2\font plus
\BIBentryALTinterwordstretchfactor\fontdimen3\font minus
  \fontdimen4\font\relax}
\providecommand{\BIBforeignlanguage}[2]{{%
\expandafter\ifx\csname l@#1\endcsname\relax
\typeout{** WARNING: IEEEtran.bst: No hyphenation pattern has been}%
\typeout{** loaded for the language `#1'. Using the pattern for}%
\typeout{** the default language instead.}%
\else
\language=\csname l@#1\endcsname
\fi
#2}}
\providecommand{\BIBdecl}{\relax}
\BIBdecl

\bibitem{yang2015achieving}
Z.~Yang and L.~Xie, ``Achieving high resolution for super-resolution via
  reweighted atomic norm minimization,'' in \emph{IEEE International Conference
  on Acoustics, Speech and Signal Processing (ICASSP)}, 2015, pp. 3646--3650.

\bibitem{candes2006compressive}
E.~Cand{\`e}s, ``Compressive sampling,'' in \emph{Proceedings of the
  International Congress of Mathematicians}, vol.~3, 2006, pp. 1433--1452.

\bibitem{donoho2006compressed}
D.~Donoho, ``{Compressed sensing},'' \emph{IEEE Transactions on Information
  Theory}, vol.~52, no.~4, pp. 1289--1306, 2006.

\bibitem{krim1996two}
H.~Krim and M.~Viberg, ``Two decades of array signal processing research: The
  parametric approach,'' \emph{IEEE Signal Processing Magazine}, vol.~13,
  no.~4, pp. 67--94, 1996.

\bibitem{stoica2005spectral}
P.~Stoica and R.~L. Moses, \emph{Spectral analysis of signals}.\hskip 1em plus
  0.5em minus 0.4em\relax Pearson/Prentice Hall Upper Saddle River, NJ, 2005.

\bibitem{tang2012compressed}
G.~Tang, B.~N. Bhaskar, P.~Shah, and B.~Recht, ``Compressed sensing off the
  grid,'' \emph{IEEE Transactions on Information Theory}, vol.~59, no.~11, pp.
  7465--7490, 2013.

\bibitem{yang2014continuous}
Z.~Yang and L.~Xie, ``Continuous compressed sensing with a single or multiple
  measurement vectors,'' in \emph{IEEE Workshop on Statistical Signal
  Processing (SSP)}, 2014, pp. 308--311.

\bibitem{linebarger1993difference}
D.~A. Linebarger, I.~H. Sudborough, and I.~G. Tollis, ``Difference bases and
  sparse sensor arrays,'' \emph{IEEE Transactions on Information Theory},
  vol.~39, no.~2, pp. 716--721, 1993.

\bibitem{malioutov2005sparse}
D.~Malioutov, M.~Cetin, and A.~Willsky, ``A sparse signal reconstruction
  perspective for source localization with sensor arrays,'' \emph{IEEE
  Transactions on Signal Processing}, vol.~53, no.~8, pp. 3010--3022, 2005.

\bibitem{hyder2010direction}
M.~Hyder and K.~Mahata, ``Direction-of-arrival estimation using a mixed
  $\ell_{2,0}$ norm approximation,'' \emph{IEEE Transactions on Signal
  Processing}, vol.~58, no.~9, pp. 4646--4655, 2010.

\bibitem{stoica2011new}
P.~Stoica, P.~Babu, and J.~Li, ``New method of sparse parameter estimation in
  separable models and its use for spectral analysis of irregularly sampled
  data,'' \emph{IEEE Transactions on Signal Processing}, vol.~59, no.~1, pp.
  35--47, 2011.

\bibitem{stoica2011spice}
P.~Stoica, P.~Babu, and J.~Li, ``{SPICE}: A sparse covariance-based estimation method for array
  processing,'' \emph{IEEE Transactions on Signal Processing}, vol.~59, no.~2,
  pp. 629--638, 2011.

\bibitem{wei2012doa}
X.~Wei, Y.~Yuan, and Q.~Ling, ``{DOA} estimation using a greedy block
  coordinate descent algorithm,'' \emph{IEEE Transactions on Signal
  Processing}, vol.~60, no.~12, pp. 6382--6394, 2012.

\bibitem{fannjiang2012coherence}
A.~Fannjiang and W.~Liao, ``Coherence pattern-guided compressive sensing with
  unresolved grids,'' \emph{SIAM Journal on Imaging Sciences}, vol.~5, no.~1,
  pp. 179--202, 2012.

\bibitem{hu2013doa}
N.~Hu, Z.~Ye, X.~Xu, and M.~Bao, ``{DOA} estimation for sparse array via sparse
  signal reconstruction,'' \emph{IEEE Transactions on Aerospace and Electronic
  Systems}, vol.~49, no.~2, pp. 760--773, 2013.

\bibitem{duarte2013spectral}
M.~F. Duarte and R.~G. Baraniuk, ``Spectral compressive sensing,''
  \emph{Applied and Computational Harmonic Analysis}, vol.~35, no.~1, pp.
  111--129, 2013.

\bibitem{liu2013sparsity}
Z.~Liu, Z.~Huang, and Y.~Zhou, ``Sparsity-inducing direction finding for
  narrowband and wideband signals based on array covariance vectors,''
  \emph{IEEE Transactions on Wireless Communications}, vol.~12, no.~8, pp.
  3896--3907, 2013.

\bibitem{hu2012compressed}
L.~Hu, Z.~Shi, J.~Zhou, and Q.~Fu, ``Compressed sensing of complex sinusoids:
  An approach based on dictionary refinement,'' \emph{IEEE Transactions on
  Signal Processing}, vol.~60, no.~7, pp. 3809--3822, 2012.

\bibitem{yang2012robustly}
Z.~Yang, C.~Zhang, and L.~Xie, ``Robustly stable signal recovery in compressed
  sensing with structured matrix perturbation,'' \emph{IEEE Transactions on
  Signal Processing}, vol.~60, no.~9, pp. 4658--4671, 2012.

\bibitem{yang2013off}
Z.~Yang, L.~Xie, and C.~Zhang, ``Off-grid direction of arrival estimation using
  sparse {Bayesian} inference,'' \emph{IEEE Transactions on Signal Processing},
  vol.~61, no.~1, pp. 38--43, 2013.

\bibitem{hu2013fast}
L.~Hu, J.~Zhou, Z.~Shi, and Q.~Fu, ``A fast and accurate reconstruction
  algorithm for compressed sensing of complex sinusoids,'' \emph{IEEE
  Transactions on Signal Processing}, vol.~61, no.~22, pp. 5744--5754, 2013.

\bibitem{austin2013dynamic}
C.~Austin, J.~Ash, and R.~Moses, ``Dynamic dictionary algorithms for model
  order and parameter estimation,'' \emph{IEEE Transactions on Signal
  Processing}, vol.~61, no.~20, pp. 5117--5130, 2013.

\bibitem{tan2014joint}
Z.~Tan, P.~Yang, and A.~Nehorai, ``Joint sparse recovery method for compressed
  sensing with structured dictionary mismatch,'' \emph{IEEE Transactions on
  Signal Processing}, vol.~62, no.~19, pp. 4997--5008, 2014.

\bibitem{chi2011sensitivity}
Y.~Chi, L.~Scharf, A.~Pezeshki, and A.~Calderbank, ``Sensitivity to basis
  mismatch in compressed sensing,'' \emph{IEEE Transactions on Signal
  Processing}, vol.~59, no.~5, pp. 2182--2195, 2011.

\bibitem{candes2013towards}
E.~J. Cand{\`e}s and C.~Fernandez-Granda, ``Towards a mathematical theory of
  super-resolution,'' \emph{Communications on Pure and Applied Mathematics},
  vol.~67, no.~6, pp. 906--956, 2014.

\bibitem{chandrasekaran2012convex}
V.~Chandrasekaran, B.~Recht, P.~A. Parrilo, and A.~S. Willsky, ``The convex
  geometry of linear inverse problems,'' \emph{Foundations of Computational
  Mathematics}, vol.~12, no.~6, pp. 805--849, 2012.

\bibitem{yang2014exact}
\BIBentryALTinterwordspacing
Z.~Yang and L.~Xie, ``Exact joint sparse frequency recovery via optimization
  methods,'' 2014. [Online]. Available: \url{http://arxiv.org/abs/1405.6585}
\BIBentrySTDinterwordspacing

\bibitem{bhaskar2013atomic}
B.~N. Bhaskar, G.~Tang, and B.~Recht, ``Atomic norm denoising with applications
  to line spectral estimation,'' \emph{IEEE Transactions on Signal Processing},
  vol.~61, no.~23, pp. 5987--5999, 2013.

\bibitem{candes2013super}
E.~J. Cand{\`e}s and C.~Fernandez-Granda, ``Super-resolution from noisy data,''
  \emph{Journal of Fourier Analysis and Applications}, vol.~19, no.~6, pp.
  1229--1254, 2013.

\bibitem{tang2015near}
G.~Tang, B.~N. Bhaskar, and B.~Recht, ``Near minimax line spectral estimation,'' \emph{IEEE Transactions on Information Theory}, vol.~61, no.~1, pp. 499--512, 2015.

\bibitem{yang2014discretization}
Z.~Yang, L.~Xie, and C.~Zhang, ``A discretization-free sparse and parametric
  approach for linear array signal processing,'' \emph{IEEE Transactions on
  Signal Processing}, vol.~62, no.~19, pp. 4959--4973, 2014.

\bibitem{yang2015gridless}
Z.~Yang and L.~Xie, ``On gridless sparse methods for line spectral estimation
  from complete and incomplete data,'' \emph{IEEE Transactions on Signal
  Processing}, vol.~63, no.~12, pp. 3139--3153, 2015.

\bibitem{azais2014spike}
J.-M. Azais, Y.~De~Castro, and F.~Gamboa, ``Spike detection from inaccurate
  samplings,'' \emph{Applied and Computational Harmonic Analysis}, vol.~38,
  no.~2, pp. 177--195, 2014.

\bibitem{condat2015cadzow}
L.~Condat and A.~Hirabayashi, ``{Cadzow denoising upgraded: A new projection
  method for the recovery of Dirac pulses from noisy linear measurements},''
  \emph{Sampling Theory in Signal and Image Processing}, vol.~14, no.~1, pp.
  17--47, 2015.

\bibitem{chen2014robust}
Y.~Chen and Y.~Chi, ``Robust spectral compressed sensing via structured matrix
  completion,'' \emph{IEEE Transactions on Information Theory}, vol.~60,
  no.~10, pp. 6576--6601, 2014.

\bibitem{tan2014direction}
Z.~Tan, Y.~C. Eldar, and A.~Nehorai, ``Direction of arrival estimation using
  co-prime arrays: A super resolution viewpoint,'' \emph{IEEE Transactions on
  Signal Processing}, vol.~62, no.~21, pp. 5565--5576, 2014.

\bibitem{mishra2014off}
K.~V. Mishra, M.~Cho, A.~Kruger, and W.~Xu, ``Off-the-grid spectral compressed
  sensing with prior information,'' in \emph{IEEE International Conference on
  Acoustics, Speech and Signal Processing (ICASSP)}, 2014, pp. 1010--1014.

\bibitem{lu2015distributed}
Z.~Lu, R.~Ying, S.~Jiang, P.~Liu, and W.~Yu, ``Distributed compressed sensing
  off the grid,'' \emph{IEEE Signal Processing Letters}, vol.~22, no.~1, pp.
  105--109, 2015.

\bibitem{lobo2007portfolio}
M.~S. Lobo, M.~Fazel, and S.~Boyd, ``Portfolio optimization with linear and
  fixed transaction costs,'' \emph{Annals of Operations Research}, vol. 152,
  no.~1, pp. 341--365, 2007.

\bibitem{candes2008enhancing}
E.~J. Candes, M.~B. Wakin, and S.~P. Boyd, ``Enhancing sparsity by reweighted
  $\ell_1$ minimization,'' \emph{Journal of Fourier Analysis and Applications},
  vol.~14, no. 5-6, pp. 877--905, 2008.

\bibitem{wipf2010iterative}
D.~Wipf and S.~Nagarajan, ``Iterative reweighted $\ell_1$ and $\ell_2$ methods
  for finding sparse solutions,'' \emph{IEEE Journal of Selected Topics in
  Signal Processing}, vol.~4, no.~2, pp. 317--329, 2010.

\bibitem{stoica2014weighted}
P.~Stoica, D.~Zachariah, and J.~Li, ``{Weighted SPICE: A unifying approach for
  hyperparameter-free sparse estimation},'' \emph{Digital Signal Processing},
  vol.~33, pp. 1--12, 2014.

\bibitem{fazel2003log}
M.~Fazel, H.~Hindi, and S.~P. Boyd, ``{Log-det heuristic for matrix rank
  minimization with applications to Hankel and Euclidean distance matrices},''
  in \emph{American Control Conference}, vol.~3, 2003, pp. 2156--2162.

\bibitem{mohan2012iterative}
K.~Mohan and M.~Fazel, ``Iterative reweighted algorithms for matrix rank
  minimization,'' \emph{The Journal of Machine Learning Research}, vol.~13,
  no.~1, pp. 3441--3473, 2012.

\bibitem{van2010theoretical}
E.~Van Den~Berg and M.~Friedlander, ``Theoretical and empirical results for
  recovery from multiple measurements,'' \emph{IEEE Transactions on Information
  Theory}, vol.~56, no.~5, pp. 2516--2527, 2010.

\bibitem{david1994algorithms}
J.~David, ``Algorithms for analysis and design of robust controllers,'' Ph.D.
  dissertation, Kat. Univ. Leuven, 1994.

\bibitem{rao1999affine}
B.~D. Rao and K.~Kreutz-Delgado, ``An affine scaling methodology for best basis
  selection,'' \emph{IEEE Transactions on Signal Processing}, vol.~47, no.~1,
  pp. 187--200, 1999.

\bibitem{chartrand2008iteratively}
R.~Chartrand and W.~Yin, ``Iteratively reweighted algorithms for compressive
  sensing,'' in \emph{IEEE International Conference on Acoustics, Speech and
  Signal Processing (ICASSP)}, 2008, pp. 3869--3872.

\bibitem{boyd2004convex}
S.~P. Boyd and L.~Vandenberghe, \emph{Convex optimization}.\hskip 1em plus
  0.5em minus 0.4em\relax Cambridge University Press, London, 2004.

\bibitem{toh1999sdpt3}
K.-C. Toh, M.~J. Todd, and R.~H. T{\"u}t{\"u}nc{\"u}, ``{SDPT3--a MATLAB
  software package for semidefinite programming, version 1.3},''
  \emph{Optimization Methods and Software}, vol.~11, no. 1-4, pp. 545--581,
  1999.

\bibitem{boyd2011distributed}
S.~Boyd, N.~Parikh, E.~Chu, B.~Peleato, and J.~Eckstein, ``Distributed
  optimization and statistical learning via the alternating direction method of
  multipliers,'' \emph{Foundations and Trends{\textregistered} in Machine
  Learning}, vol.~3, no.~1, pp. 1--122, 2011.

\end{thebibliography}


\end{document}